\documentclass[11pt]{article}

\include{epic.sty}
\usepackage{amssymb,amsmath,amsthm,hyperref,srcltx,accents}
\usepackage{amsfonts,stmaryrd,mathrsfs}
\usepackage{paralist}
\usepackage{tikz}
\usepackage{subfigure}
\usepackage{times}
\usepackage{array}
\usepackage{multirow}
\usepackage{graphicx}
\usepackage{gensymb}
\usepackage{algorithm,algorithmic}

\newboolean{IncludeFragments}
\setboolean{IncludeFragments}{false}

\bibliographystyle{abbrv}

\sloppy

\setlength{\oddsidemargin}{0in}
\setlength{\topmargin}{-.35in}
\addtolength{\textwidth}{1.2in}
\addtolength{\textheight}{1in}

\newlength{\originalbase}
\setlength{\originalbase}{\baselineskip}
\newcommand{\spacing}[1]{\setlength{\baselineskip}{#1\originalbase}}

\newif\ifnotesw\noteswtrue


\newtheorem{theorem}{Theorem}[section]

\newtheorem{lemma}[theorem]{Lemma}

\newtheorem{cor}[theorem]{Corollary}
\newtheorem{corollary}[theorem]{Corollary}

\theoremstyle{definition}
\newtheorem{mydef}[theorem]{Definition}

\newcommand{\remove}[1]{}

\newcommand{\R}{{\mathbb R}}

\def\cP{{\mathcal P}}

\def\cQ{{\mathcal Q}}
\def\cH{{\mathcal H}}

\def\hull{\mathrm{Hull}}
\def\proj{\mathrm{proj}}

\def\retract{\pi}
\def\metricproj{\rho}


\begin{document}
\spacing{1.4}
\parskip=+3pt

\def\proofend{\hfill$\Box$\medskip}
\def\Proof{\noindent{\emph{Proof. }}  }
\newcommand{\ProofOf}[1]{\noindent{\emph{ Proof of {#1}. }}}

\newenvironment{proofof}[1]{\ProofOf{#1}}{\hfill $\Box$ \medskip}

\def\cop{{p}}
\def\rob{{e}}

\def\proj{\mathrm{proj}}
\def\diam{\mathrm{diam}}

\def\leapenv{
\tikzstyle{every node}=[font=\footnotesize]
\draw [thick] (1.25,1.25) -- (2, 1)  -- (2.25, .25) -- (1,0) -- (1,-.25) -- (4.5, -.5) -- (4,-1.5) -- (-1, -1.5) -- (-1.25, .2)  -- (1.25,1.25);

\draw[fill=gray] (-.5, -.9) -- (-.2, -1) -- (0, -.7) -- (-.3, -.6) -- cycle;
\draw[fill=gray] (-.5, 0) -- (-.1, 0) -- (0, -.3) -- cycle;
\draw[fill=gray] (1.1, .75) -- (1.85, .5) -- (1.6, 1) -- cycle;
\draw[fill=gray] (.78, -.8) -- (1.25, -1.2) -- (1.5, -.9) -- cycle;
\draw[fill=gray] (2.5, -.8) -- (3.25, -1.2) -- (3.1, -.8) -- (2.75, -.6) -- cycle;

\draw[dashed]  (-.5,.5) --(-.5, -1.5);
\draw[dashed] (.5, -1.5) -- (1,-.25) -- (1,0) -- (1.17,1.21); 
\draw[dashed]  (2.5, -.35) -- (2.5,-1.5);

\draw[fill] (-.5,.51) circle (1pt);
\draw[fill] (-.5, -1.5) circle (1pt);
\draw[fill] (2.5, -.35) circle (1pt);
\draw[fill] (2.5,-1.5) circle (1pt);


\draw[fill] (1,0) circle (1pt);
\draw[fill] (1, -.25) circle (1pt);
\draw[fill] (.5, -1.5) circle (1pt);
\draw[fill] (1.17,1.21) circle (1pt);
\draw[fill] (-1, -1.5) circle (1pt);

\node[below] at (-.5, -1.5) {$u$};
\node[below] at (.5, -1.5) {$v$};
\node[below] at (2.5, -1.5) {$w$};

}


\title{A Leapfrog Strategy for Pursuit-Evasion in a Polygonal Environment}

\author{Brendan Ames\footnote{Institute for Mathematics and its Applications, University of Minnesota, Minneapolis MN 55455. Current affiliation: Department of Mathematics, University of Alabama, Tuscaloosa, AL 35487},~
Andrew Beveridge\footnote{Department of Mathematics, Statistics and Computer Science, Macalester College, St Paul MN 55105},~
Rosalie Carlson\footnote{Department of Mathematics, Harvey Mudd College, Claremont CA 91711. Current affiliation: Department of Mathematics, University of California, Santa Barbara CA 93106.},~
 Claire Djang\footnote{Department of Mathematics, Oberlin College, Oberlin OH 44074}, \\ 
 Volkan Isler\footnote{Department of Computer Science, University of Minnesota, Minneapolis MN 55455},~~
 Stephen Ragain\footnote{Department of Mathematics, Pomona College, Claremont CA 91711. Current affiliation: Department of Management Science and Engineering, Stanford University, Stanford CA 94305},~~and 
Maxray Savage$^{\dagger}$
}

\maketitle

\begin{abstract}
We study pursuit-evasion in a polygonal environment with polygonal obstacles.
In this turn based game, an evader $\rob$ is chased by pursuers $\cop_1, \cop_2, \ldots, \cop_{\ell}$. The players have full information about the environment and the location of the other players. The pursuers are allowed to coordinate their actions. On the pursuer turn, each $\cop_i$ can move to any point at distance at most 1 from his current location. On the evader turn, he moves similarly. The pursuers win if some pursuer becomes co-located with the evader in finite time. The evader wins if he can evade capture forever.

It is known that one pursuer can capture the evader in any simply-connected polygonal environment, and that three pursuers are always sufficient in any polygonal environment (possibly with  polygonal obstacles). We contribute two new results to this field. First, 
we fully characterize when an environment with a single obstacles is one-pursuer-win or two-pursuer-win.
Second, we give sufficient (but not necessary) conditions for an environment to have a winning strategy for two pursuers.  Such environments can be swept by a \emph{leapfrog strategy} in which the two cops alternately guard/increase the currently controlled area. The running time of this algorithm is $O(n \cdot h  \cdot \diam(P))$ where $n$ is the number of vertices, $h$ is the number of obstacles and $\diam(P)$ is the diameter of $P$.

More concretely, for an environment with $n$ vertices, we describe an $O(n^2)$ algorithm that (1) determines whether the obstacles are well-separated, and if so, (2) constructs the required partition for a leapfrog strategy.  
\end{abstract}


\section{Introduction}

We study a pursuit-evasion game known as the {\em lion and man game}. In this game, which takes place in  a polygonal environment,
pursuers $\cop_1, \cop_2, \ldots , \cop_{\ell}$ try to capture an evader $\rob$. The environment   consists of the polygon $P$ with some polygonal obstacles (or holes) $\cH = \{ H_1, H_2, \ldots , H_h \}.$  The players are located in $P  \,\,  \backslash  \!  \bigcup_{H_i \in \cH} H_i.$ We use $P$ to denote the environment (with obstacles), and    $\partial P$ to denote its outer boundary. Each obstacle $H_i$ is  an open set, so that the players may occupy a point on its boundary $\partial H_i$.
 A player located at point $x \in P$ can move to any point in
$
B(x,1) = \{ y \in P \mid d(x,y) \leq 1 \},
$
where $d(x,y)=d_P(x,y)$ is the length of a shortest $(x,y)$-path in $P$. Let $\rob^t$ denote the position of the evader at the end of round $t$, and similarly $\cop_i^t$ is the position of pursuer $\cop_i$.

The game is played as follows. First, the pursuers choose their initial positions $\cop_1^0, \cop_2^0, \ldots , \cop_{\ell}^0$. Next, the evader chooses his initial position $\rob^0$. Gameplay in round $t \geq 1$ proceeds as follows.
First, each pursuer $\cop_i$ moves  from its current position $\cop_i^{t-1}$ to a point $\cop_i^{t} \in B(\cop_i^{t-1},1)$. (Note that a pursuer may choose to remain stationary under these rules.) If $\cop_i^t = \rob^{t-1}$ for some $i \in \{ 1, \ldots \ell \}$, then the pursuers are victorious. Otherwise, the evader moves according to the same rule, moving from $\rob^{t-1}$ to a point $\rob^t \in B(r_{t-1}, 1)$. The evader wins if he evades capture forever. We consider the full-information version of this game, where each agent knows the environment, as well as the location of all other agents. Furthermore, the pursuers may coordinate their strategy.

 Pursuit-evasion in geometric environments has been extensively studied;  for a survey, see \cite{chi}. In addition to revealing interesting mathematical properties of polygons, the lion and man game with full visibility has practical applications in robotics. The pursuit strategy we provide can be used by two robots to intercept an intruder in a complex environment such as a casino. In such an environment, it might be possible to track the location of the intruder at all times using a camera network. However, the robots must still capture the intruder to detain him. Many other security, surveillance and search-and-rescue applications can be modeled similarly as pursuit-evasion games.
 
The  game's  history extends at least to  the 1930s when Rado posed the Lion-and-Man problem   in a circular arena, with lion chasing man \cite{littlewood}. Surprisingly, when time is continuous, man has a winning strategy \cite{littlewood,alonso}. However in the turn-based version our natural intuition prevails: the lion has a winning strategy. The turn-based version has received a good deal of attention (cf. \cite{alonso}, \cite{sgall}, \cite{KR}) and it is known that a single pursuer can always catch an evader in a simply connected polygon \cite{isler05tro}. Recently,  Bhadauria et al. \cite{bhadauria+klein+isler+suri} proved that 3 pursuers can always capture an evader in any polygonal environment (with obstacles). This 3-pursuer result is analogous to Aigner and Fromme's classic result about the pursuit-evasion game played a planar graph \cite{aigner+fromme}. When played on a graph, this game is known as \emph{cops and robbers}; see the surveys \cite{alspach, hahn} and the monograph \cite{bonato+nowakowski}. 
Variants of pursuit-evasion with limited pursuer sensing capabilities have also been studied \cite{guibas,isler04sidma} but we focus on the full-visibility case.

Herein, we study polygonal environments where one or two pursuers are sufficient for capture. Our results leverage two techniques: {guarding} and {projection}. Given a sub-environment $Q \subset P$, we say that a pursuer \emph{guards} $Q$ if (a) the evader is not currently in $Q$ and (b) if the evader crosses into $Q$ , the pursuer can respond by capturing him. A \emph{projection function} $\pi: P \rightarrow Q$ is a function that (a) is the identity map on $Q$ and (b) satisfies $d_Q(\pi(x), \pi(y)) \leq d_P(x,y)$. Note that the existence of a projection function $\pi: P \rightarrow Q$ guarantees that $Q$ is \emph{geodesically convex} in $P$, meaning that for any $x,y \in Q$, there is at least one shortest $(x,y)$-path $\Pi \subset Q$ (though there may be other shortest $(x,y)$-paths that leave $Q$).
We use projection functions to devise guarding strategies.  Our main result is the following.

\begin{theorem}
[Leapfrog Theorem]
\label{thm:leapfrog}
Suppose that $P$ contains a family of nested subregions $Q_0 \subset Q_1 \subset Q_2 \subset \cdots
\subset Q_k=P$ and, for $0 \leq i \leq k-1$, 
the following hold:
\begin{enumerate}
\item[(L1)] $Q_0$ is simply connected,
\item[(L2)] there is family of projections $\pi_i: Q_{i+1} \rightarrow Q_i$ 
\item[(L3)] $Q_{i+1} \setminus Q_{i}$ is a finite collection of simply-connected regions,
\item[(L4)] $Q_{i}$  intersects fewer obstacles than $Q_{i+1}$.
\end{enumerate}
Then $P$ is two-pursuer-win.
\end{theorem}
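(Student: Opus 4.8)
The plan is to prove the theorem by induction along the chain, realizing the leapfrog idea through the following invariant: at the start of stage $i$ one pursuer guards $Q_i$ (so the evader is trapped in $P\setminus Q_i$) while the other pursuer is free, and the work of stage $i$ is to advance the guarded region from $Q_i$ to $Q_{i+1}$. Since $Q_k=P$, reaching stage $k$ confines the evader to $P\setminus P=\emptyset$, i.e.\ it has been captured. The first ingredient I need is that every $Q_j$ is guardable by a single pursuer. Although (L2) only supplies $\pi_i\colon Q_{i+1}\to Q_i$, the composite $\Pi_j=\pi_j\circ\pi_{j+1}\circ\cdots\circ\pi_{k-1}\colon P\to Q_j$ is again a projection: it restricts to the identity on $Q_j$, and the defining contraction telescopes, $d_{Q_j}(\Pi_j(x),\Pi_j(y))\le d_{Q_{j+1}}(\Pi_{j+1}(x),\Pi_{j+1}(y))\le\cdots\le d_{Q_k}(x,y)=d_P(x,y)$. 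Thus each $Q_j$ is geodesically convex, and by the standard guarding scheme attached to a projection (station the pursuer at $\Pi_j(\rob)$; a move of the evader shifts this target by at most the same distance, so the pursuer can track it, and the evader is captured the instant it enters $Q_j$ because $\Pi_j$ is the identity there) a single pursuer can establish such a guard in time $O(\diam(P))$ whenever the evader is outside $Q_j$, and then maintain it forever.

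For the base case $i=0$, the region $Q_0$ is simply connected by (L1), so the single-pursuer result of \cite{isler05tro} applies inside $Q_0$ (treating the chords $\partial Q_0\cap\inter(P)$ as exits and $\partial Q_0\cap\partial P$ as walls): pursuer $\cop_1$ either captures the evader or drives it out of $Q_0$ into $P\setminus Q_0$, after which $\cop_1$ converts to a guard of $Q_0$. For the advancement step, suppose $\cop_1$ guards $Q_i$ while the evader lies in $P\setminus Q_i=(Q_{i+1}\setminus Q_i)\sqcup(P\setminus Q_{i+1})$. By (L3) the annular region $Q_{i+1}\setminus Q_i$ is a finite disjoint union of simply connected pieces $R_1,\dots,R_m$, each meeting $Q_i$ only along $\partial Q_i$, which $\cop_1$ guards. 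I would clear these pieces one at a time with the free pursuer $\cop_2$: since $R_t$ is simply connected and its inner wall $\partial Q_i$ is guarded, the single-pursuer strategy run inside $R_t$ forces capture or expels the evader through the outer frontier $\partial Q_{i+1}\cap\overline{R_t}$ into $P\setminus Q_{i+1}$. Provided recontamination is controlled (addressed below), when the last piece is cleared the evader lies in $P\setminus Q_{i+1}$, a guard of $Q_{i+1}$ is installed, and relabelling the pursuers restores the invariant for stage $i+1$. Each leapfrog step strictly increases the number of obstacles inside the guarded region by (L4), so the process terminates after at most $h$ stages, giving total running time $O(n\cdot h\cdot\diam(P))$.

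The crux, and the step I expect to require the most care, is recontamination during the clearing of $Q_{i+1}\setminus Q_i$: after $\cop_2$ expels the evader from $R_t$, the evader roams freely in $P\setminus Q_{i+1}$ and could re-enter an already-cleared piece $R_{t'}$ through its outer frontier while $\cop_2$ is busy with $R_{t+1}$, and note that a pursuer guarding $Q_i$ alone does \emph{not} prevent this, since $\Pi_i(\rob)\in Q_i$ offers no threat to an evader sitting in $R_{t'}$. I would resolve this by absorbing each cleared piece into the guarded region. The key geometric fact I would prove is that $Q_i\cup R_1\cup\cdots\cup R_t$ is itself geodesically convex in $P$: any shortest path between two of its points that strayed into a not-yet-cleared piece $R_{t'}$ would have to cross $Q_i$ twice (distinct pieces are separated by $Q_i$ within $Q_{i+1}$), so replacing that excursion by the corresponding geodesic of the convex set $Q_i$ yields a path of no greater length lying inside $Q_i\cup R_1\cup\cdots\cup R_t$. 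It is exactly here that the two pursuers leapfrog: the pursuer that has just swept $R_t$ and stands on its outer frontier takes over the guard of the enlarged, still geodesically convex region, while the former guard is released to sweep $R_{t+1}$. The two delicate points inside this argument, where I expect the real work to lie, are (i) upgrading geodesic convexity of the enlarged region to an actual projection onto it, so that it can be guarded at all, and (ii) the seam at the handoff, namely showing that the sweeping pursuer finishes at (or within one move of) the projection point of the enlarged region, so that the new guard is established with no window in which the evader can slip back across the frontier.
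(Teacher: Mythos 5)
Your setup matches the paper's: your composite projections $\Pi_j=\pi_j\circ\pi_{j+1}\circ\cdots\circ\pi_{k-1}$ are exactly the paper's recursively defined $\sigma_j$, the invariant (one pursuer guards $Q_i$ while the other advances the frontier) is the same, and the base case via lion's strategy in the simply connected $Q_0$ is identical. The divergence, and the gap, is in the advancement step. You have the free pursuer chase the evader \emph{physically} through the pieces of $Q_{i+1}\setminus Q_i$, expelling it piece by piece; this is what creates your recontamination problem, and your proposed fix (absorbing cleared pieces into the guarded region) requires exactly the two things you concede you cannot yet supply: a genuine projection onto each partial union $Q_i\cup R_1\cup\cdots\cup R_t$ (geodesic convexity alone does not produce one in this framework --- that is precisely why Corollary \ref{cor:leapfrog} needs the extra hypothesis that $Q_0$ meets $\partial P$ in two points, so that minimal loops supply path projections via Lemma \ref{lemma:minloop}), and a seamless pursuit-to-guard handoff (which is genuinely hard: it is the content of Lemma \ref{lemma:transition}, proved in Section \ref{sec:obstacle} only for a single line segment and with real effort, and the paper explicitly notes that the leapfrog strategy is designed so that no pursuer ever has to make this transition). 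Your convexity argument for the partial unions also has a hole as written: a shortest path between two cleared pieces can stray through $P\setminus Q_{i+1}$ rather than through an uncleared piece, and your surgery through $Q_i$ does not cover that case (it is repairable using geodesic convexity of $Q_{i+1}$, but you did not do so).

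The idea you are missing, which dissolves all of these difficulties at once, is that the free pursuer $\cop_2$ should chase not the evader but the evader's projection $\sigma_{i+1}(\rob)$ --- a virtual target that moves at most $1$ per turn because $\sigma_{i+1}$ is distance non-increasing. This target lies in a single simply connected component $R_1$ of $Q_{i+1}\setminus Q_i$ and can never jump to another component: any such jump forces the projected trajectory across $\partial Q_i\cap\partial Q_{i+1}$, and by the compatibility $\sigma_i=\pi_i\circ\sigma_{i+1}$ the guarding pursuer $\cop_1$ captures the evader the moment that happens (Corollary \ref{cor:guard}). Consequently only the one component containing the projection ever needs clearing --- there is no sequential sweep of $R_1,\dots,R_m$, hence no recontamination issue at all. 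Moreover, the instant $\cop_2$ catches the projection (Lemma \ref{lemma:clear}), either $\rob=\sigma_{i+1}(\rob)$ and the evader is captured, or $\cop_2$ is already colocated with $\sigma_{i+1}(\rob)\in\partial Q_{i+1}$ and is therefore already guarding all of $Q_{i+1}$, every component included, with no window for the evader to slip across. So your proposal is not a complete proof: its two admitted ``delicate points'' are exactly where it breaks, and the paper's projection-chasing device bypasses both rather than solving them.
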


In particular, this theorem holds for a family of nested subsets that are geodesically convex in $P$. 

\begin{corollary}
\label{cor:leapfrog}
Suppose that $P$ contains a family of nested subregions $Q_0 \subset Q_1 \subset Q_2 \subset \cdots
\subset Q_k =P$ such that (L1), (L3), (L4)  hold for $0 \leq i \leq k-1$. If  
$Q_0 \cap \partial P$ contains at least two points and $Q_i$ is geodesically convex in $P$ for $0 \leq i \leq k$ then
 $P$ is two-pursuer-win.
\end{corollary}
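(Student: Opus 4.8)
The plan is to derive the Corollary from the Leapfrog Theorem by showing that its one remaining hypothesis, (L2), is a free consequence of geodesic convexity. Since (L1), (L3), and (L4) are assumed outright, the only object to manufacture is a family of projection functions $\pi_i \colon Q_{i+1} \to Q_i$, and the boundary hypothesis $\abs{Q_0 \cap \partial P} \ge 2$ will serve to guarantee that the base region is a genuine (non-degenerate) simply connected sub-environment on which the induction underlying the Theorem can be seeded.

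First I would localize the ambient space from $P$ down to $Q_{i+1}$. Because $Q_i \subseteq Q_{i+1} \subseteq P$, any shortest $(x,y)$-path in $P$ with $x,y \in Q_i$ has length $d_P(x,y) \le d_{Q_{i+1}}(x,y)$, while geodesic convexity of $Q_i$ in $P$ supplies such a shortest path lying entirely in $Q_i \subseteq Q_{i+1}$. Hence $d_{Q_{i+1}}(x,y) = d_P(x,y)$ for all $x,y \in Q_i$ and the witnessing geodesic stays inside $Q_i$; that is, $Q_i$ is geodesically convex in $Q_{i+1}$ as well. In particular $Q_i$ encloses no obstacle of $Q_{i+1}$, so the induced metric space $(Q_i, d_{Q_i})$ is uniquely geodesic. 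I would then define $\pi_i$ to be the nearest-point projection, $\pi_i(x) = \arg\min_{q \in Q_i} d_{Q_{i+1}}(x,q)$. That $\pi_i$ restricts to the identity on $Q_i$ is immediate, and well-definedness follows from geodesic convexity: two distinct minimizers would be joined by a geodesic inside $Q_i$ whose midpoint is strictly closer to $x$. Equivalently, $\pi_i(x)$ is the first point where a shortest path from $x$ into $Q_i$ meets $\partial Q_i$, crossing a bounding geodesic chord.

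The main obstacle is the nonexpansive inequality $d_{Q_i}(\pi_i(x), \pi_i(y)) \le d_{Q_{i+1}}(x,y)$, i.e. clause (b) of a projection function, in an environment that may contain holes and is therefore not globally of nonpositive curvature. In a CAT(0) space this is the classical fact that projection onto a convex set is $1$-Lipschitz, but here positive curvature is concentrated at the reflex vertices of the obstacles of $Q_{i+1}$. I would attack this by fixing a shortest $(x,y)$-path $\Gamma \subset Q_{i+1}$ of length $d_{Q_{i+1}}(x,y)$ and comparing it with the image path through $\pi_i$ chord-by-chord: the projection geodesics enter $Q_i$ through its bounding chords, and since $Q_i$ is uniquely geodesic the portion of the comparison taking place inside $Q_i$ behaves convexly, while a reflex obstacle vertex of $Q_{i+1}$ can only lengthen $\Gamma$ relative to the projection. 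Making this ``can only lengthen'' step rigorous — ruling out that an obstacle of $Q_{i+1}$ forces the two projected images to separate faster than $x$ and $y$ themselves do — is the crux; I expect to argue it by a first-variation or unfolding argument along the geodesic chords of $\partial Q_i$, using crucially that those chords are shortest paths in $P$.

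Finally, the boundary condition closes the base case. When $\abs{Q_0 \cap \partial P} \ge 2$, the interior boundary $\partial Q_0 \setminus \partial P$ is a union of geodesic chords anchored at two or more points of $\partial P$, rather than a closed loop around an obstacle or a single degenerate point; together with (L1) this makes $Q_0$ a non-degenerate simply connected polygon that one pursuer can first clear (via the single-pursuer result for simply connected environments) and thereafter guard through $\pi_0$. With (L2) now in hand alongside (L1), (L3), (L4), all four hypotheses of the Leapfrog Theorem are satisfied, and we conclude that $P$ is two-pursuer-win.
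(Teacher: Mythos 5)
There is a genuine gap, and it is precisely the one the paper warns about. You define $\pi_i$ as the nearest-point (metric) projection onto $Q_i$ and argue it is well defined because two distinct minimizers would be joined by a geodesic in $Q_i$ whose midpoint is strictly closer to $x$. That midpoint argument presupposes convexity of the distance function along geodesics (a CAT(0)-type property), and this fails in environments with obstacles, where positive curvature concentrates at reflex vertices. A concrete counterexample satisfying \emph{all} hypotheses of Corollary \ref{cor:leapfrog}: let $P$ be the square $[0,10]\times[0,10]$ with the single open rectangular obstacle $(2,6)\times(4.5,5.5)$, and take $k=1$ with $Q_0=\Pi$ the segment from $u=(2,0)$ to $v=(2,10)$ (the obstacle's closed left edge lies on $\Pi$). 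Then $Q_0$ is simply connected and geodesically convex, meets $\partial P$ in two points, and $P\setminus Q_0$ has two simply connected components, so (L1), (L3), (L4) and the boundary condition all hold. Yet the point $x=(7,5)$ has \emph{two} nearest points on $\Pi$, namely $(2,5.5)$ and $(2,4.5)$, and their midpoint $(2,5)$ is strictly \emph{farther} from $x$ than either, since any path from $x$ to $(2,5)$ must first go around the obstacle. So your map is not well defined, and your uniqueness argument breaks; this is exactly the phenomenon the paper exhibits in Figure \ref{fig:geomproj}(d) and the reason it introduces path projections as a replacement for metric projections. Two further problems: your claim that geodesic convexity of $Q_i$ in $P$ forces $Q_i$ to contain no obstacle (hence be uniquely geodesic) is false --- $Q_k=P$ is trivially geodesically convex and contains every obstacle, and already $Q_1$ in Figure \ref{fig:leap} meets obstacles --- and the nonexpansiveness inequality, which you correctly identify as the crux, is only conjectured (``I expect to argue it by a first-variation or unfolding argument''), not proved.

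The paper's proof avoids all of this by never using nearest points. The hypothesis that $Q_0\cap\partial P$ contains two points is not a non-degeneracy condition; it supplies anchor points $u,v\in\partial P$, which Definition \ref{def:path-proj} requires. Since $Q_0\subset Q_i$ and $u,v\in\partial P$, we get $u,v\in\partial Q_i$ for every $i$, and geodesic convexity of $Q_i$ is then used to see $\partial Q_i$ as a $(u,v)$-minimal loop. Lemma \ref{lemma:minloop} (path projections along the two arcs of the loop, glued by Lemma \ref{lemma:piecewise}) immediately yields a projection $\pi_i:P\rightarrow Q_i$ for each $i$; these projections match points by arc length along minimal paths, with the $1$-Lipschitz property inherited from \cite{bhadauria+klein+isler+suri} rather than re-proved. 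With (L2) so obtained, Theorem \ref{thm:leapfrog} applies. Your reduction-to-the-Theorem skeleton is the right one, but the projections must be built this way.
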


\noindent
In proving Theorem \ref{thm:leapfrog}, we describe a \emph{leapfrog strategy} for two pursuers. First, $\cop_1$ evicts the evader from $Q_0$ and then guards this region. Next, the $\cop_2$ clears and guards $Q_1 \backslash Q_0$. In the process, $\cop_2$ ends up guarding all of $Q_1$, which frees up $\cop_1$ to leapfrog over $p_2$ to tackle $Q_2 \backslash Q_1$, and so on. Figure 
\ref{fig:leap} shows an environment and its leapfrog decomposition where each $Q_i$ is geodesically convex. This decomposition has some subtle features: region $Q_1$ intersects two more obstacles than $Q_0$, and $Q_3 \backslash Q_2$ is not connected. The leapfrog strategy handles both situations. In Section \ref{sec:leapfrog}, we will return to this example environment to illustrate the leapfrog strategy. 

\begin{figure}[ht]

\begin{center}
\begin{tikzpicture}[scale=1.25]

\leapenv

\draw[very thick] (1,0) -- (1, -.25);
\draw[very thick]  (.5, -1.5) -- (-1, -1.5) -- (-1.25, .2) -- (1.17,1.21);

\node at (-.85,-.5) {$Q_0$};
\node at (.4,.25) {$R_1$};
\node at (1.5,.4) {$R_2'$};
\node at (2,-1) {$R_2$};
\node at (3.6,-1) {$R_3$};





\node at (4, .75) {$\begin{array}{l}   
Q_1 = Q_0 \cup R_1 \\  
Q_2 = Q_1 \cup R_2 \cup R_2' \\ 
Q_3 = Q_1 \cup R_3 \\    
\end{array}$};

\end{tikzpicture}
\end{center}

\caption{A  leapfrog partition $Q_0 \subset Q_1 \subset Q_2 \subset Q_3$ where each $Q_i$ is geodesically convex. Region $Q_1$ intersects two more obstacles than $Q_0$. Region  $Q_2 \backslash Q_1$ is disconnected and the two components of  $\partial Q_1 \cap \partial Q_2$ are thickly drawn. }
\label{fig:leap}
\end{figure}
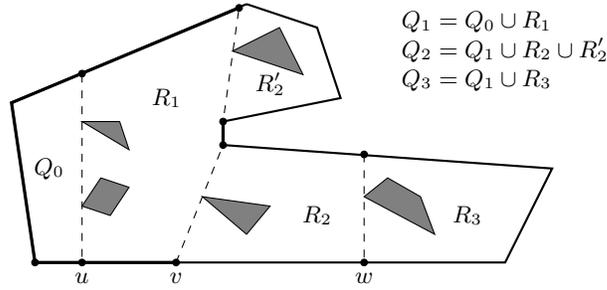

The leapfrog strategy completes in time $O( n \cdot h \cdot \mbox{diam}(P))$ where $n$ is the number of vertices of the environment (on both the outer boundary and on the obstacles), $h$ is the number of obstacles,  and $\mbox{diam}(P)$ is the diameter of the environment. 
(Note that the fourth condition ensures that $k \leq h+1$. )

Along the way, we also resolve the following question: when does one pursuer have a winning strategy a polygonal environment with a single obstacle $H$? The determining factor is the length of the perimeter of the convex hull of that obstacle.

\begin{theorem}
\label{thm:one-hole}
Suppose that $P$ is a polygonal environment with one obstacle $H$ with convex hull $J=\hull(H)$. Then $P$ is pursuer-win if and only if $J$ has perimeter $\ell \leq 2$. 
\end{theorem}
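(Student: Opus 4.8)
The plan is to reduce the single-obstacle problem to a one-dimensional ``circling'' contest around the hole, governed by the length of the shortest free-space loop that winds once around $H$. The first step is to establish the geometric fact underlying the theorem: the shortest closed curve in $P\setminus H$ homotopic to $\partial H$ has length exactly the perimeter $\ell$ of $J=\hull(H)$. This is a taut-string argument---any such loop can be tightened until it hugs $H$ along the obstacle's convex arcs and cuts straight across each concavity, so in the limit it traces $\partial J$. (Here I assume the generic case $J\subset P$, so that $\partial J$ lies in the free space; if $\partial P$ intrudes into the hull the shortest loop is longer, and the criterion should be read with this loop length in place of $\ell$.) I will take $\partial J$ itself as the reference loop and parameterize positions near the hole by an arc-length coordinate $\theta\in\R/\ell\R$ along $\partial J$, obtained from the nearest-point projection $\rho$ onto the convex set $J$.

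For necessity I would show that if $\ell>2$ then $\rob$ wins by a mirroring strategy on $\partial J$. The evader travels along $\partial J$ and maintains the invariant that it occupies the point antipodal to $\rho(\cop)$, so that both arcs of $\partial J$ between $\rho(\cop)$ and $\rob$ have length $\ell/2$. Whenever $\cop$ shifts its projected position by some $\delta\le 1$ along the loop, $\rob$ shifts by the same $\delta$ in the same rotational direction, costing the evader distance at most $\delta\le 1$ and restoring antipodality. Since any path from $\cop$ to the evader's position must wind partway around $H$, its length is at least the shorter arc of $\partial J$ separating them, namely $\ell/2>1$; hence $\cop$ can never become co-located with $\rob$ in a single move, and capture never occurs.

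For sufficiency I would show that if $\ell\le 2$ then $\cop$ wins by a projection-based guarding strategy. The key lemma is that $\rho$ is arc-length non-expanding near the hole: because $H$ lies inside $J$, a tangential displacement of the evader at positive distance from $J$ projects to a strictly shorter arc along $\partial J$ (the circular-obstacle computation, $R\theta<r\theta$ for $r>R$, is the model case). Consequently, if $\rob$ moves distance at most $1$ then $\theta(\rob)$ advances by at most $1$ in arc-length, which $\cop$ can neutralize by moving the same arc-length along $\partial J$; the hypothesis $\ell/2\le 1$ is what lets $\cop$ first acquire the matching position $\theta(\cop)=\theta(\rob)$ and keeps either half-loop within a single move's reach. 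Having pinned the evader's angular coordinate, $\cop$ stands on $\partial J$ directly ``beneath'' $\rob$, so the geodesic running from $\cop$'s hull position radially out to $\partial P$ is a barrier that, together with $H$, cuts $P$ into a simply connected region confining $\rob$. The pursuer then guards this moving barrier and invokes the known simply-connected one-pursuer strategy of \cite{isler05tro} to finish.

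I expect sufficiency to be the main obstacle. Making the ``match and then clear'' argument rigorous requires (i) a clean proof that $\rho$ contracts arc-length for \emph{all} admissible evader moves, not merely tangential ones, and in the presence of the pockets between $H$ and $J$, where $\rho$ can jump as the projection crosses a bridge of $\partial J$; (ii) verifying that the unit move budget against the $\ell/2\le 1$ half-loop genuinely lets $\cop$ re-establish the guard no matter which way $\rob$ slips; and (iii) formalizing the reduction once the loop is neutralized, i.e. showing the guarded radial barrier persists as $\cop$ sweeps so that the residual simply-connected capture result may be applied.
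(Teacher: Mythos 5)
Your necessity half ($\ell > 2$ implies evader wins) is essentially the paper's own argument (Lemma \ref{lemma:robberwin}): the evader mirrors at arc-distance $\ell/2$ from the pursuer's projection on $\partial J$, and non-expansiveness of the projection makes the mirroring feasible while keeping $d(\cop,\rob) \geq \ell/2 > 1$. One detail to repair: your coordinate $\theta$ comes from nearest-point projection onto the convex set $J$, which is the \emph{identity} on the pockets $J \setminus H$, so it does not assign a boundary coordinate to a pursuer who enters a pocket; the paper patches exactly this by replacing the metric projection inside each pocket $Q_i$ with a path projection onto the bridge segment $\Pi_i$ and gluing via Corollary \ref{cor: piecewise projection}.

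The genuine gap is in sufficiency ($\ell \leq 2$ implies pursuer wins), and it is the step your items (i)--(iii) flag but do not resolve: the transition from guarding to capturing. In your plan the pursuer perpetually matches the evader's projected coordinate, guarding a \emph{moving} radial barrier, and then ``invokes'' the simply-connected result of \cite{isler05tro}. But these two behaviors are incompatible. While the pursuer stays pinned to $\rho(\rob)$ he never closes distance, and the moment he abandons that position to run lion's strategy (which requires moving along shortest paths from a fixed anchor $z$), the barrier is no longer guarded: the evader can cross it, wind around $H$, and erase all progress --- this is precisely why a single pursuer loses when $\ell > 2$, so any correct proof must show where $\ell \leq 2$ enters to break this cycle. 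The paper's resolution is Lemma \ref{lemma:transition}: after reaching $\cop^1 = \rho(\rob^0)$, the pursuer erects a \emph{fixed straight} barrier $\Pi$ (a supporting line of $J$ perpendicular to the direction of $\rob^0$, so that $H$ and the evader lie on opposite sides, or the evader is trapped in a pocket), and an explicit coordinate computation shows that in a \emph{single move} the pursuer can hop onto a lion's-strategy circle centered at a distant point $z$ (at distance $s = b^4/a$) whose guarded disc already contains all of $\Pi$. From then on lion's strategy simultaneously guards the barrier and makes monotone progress, and Lemma \ref{lemma:lion} finishes. Without this transition lemma, or an equivalent device showing the barrier stays sealed while pursuit begins, your sufficiency argument does not go through; note also that the paper deliberately uses a fixed straight barrier rather than your moving radial one, since the transition computation is what dictates the barrier's form.
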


Finally, we complement this work with some computational results which identify sufficient (but not necessary) conditions for when an environment has a decomposition as described in Theorem \ref{thm:leapfrog}.  In Section \ref{sec:sweep}, we first make some observations about when the three-pursuer Minimal Path Strategy of Badauria et al.~\cite{bhadauria+klein+isler+suri} gives rise to a leapfrog decomposition. Such environments can be detected in $O(n^6)$ time. Next, we give an $O(n^2)$ algorithm for finding such a nested family of sets when the obstacles of $P$ are well-separated.  This result, formulated as  Theorem \ref{thm:2sweep} below,  requires technical definitions of \emph{sweepable} polygons and \emph{strictly two-sweepable} environments, so we defer its statement until later. This algorithm uses dual polygons and adapts the monotonicity results of Bose and van Kreveld \cite{bose}.

The paper is organized as follows. Section \ref{sec:lion} describes lion's strategy, a known strategy for a single pursuer to capture an evader in a simply connected environment. Section \ref{sec:proj} develops the projection and guarding framework that underlies our main results. In Section \ref{sec:obstacle}, we prove Theorem \ref{thm:one-hole} and in Section \ref{sec:leapfrog}, we prove Theorem \ref{thm:leapfrog}. Section \ref{sec:sweep} contains the  exploration of two subfamilies of leapfrog environments.  We conclude in Section \ref{sec:conc} with some avenues for future research.


\section{Lion's strategy}
\label{sec:lion}

The lion-and-man game (cf.~\cite{alonso}),  takes place in a circular arena, with lion chasing man. Sgall \cite{sgall} considered the turn-based version played in the non-negative quadrant of the plane. He showed that a lion starting at $(x_0,y_0)$ captures a man starting at $(x_0', y_0')$ if and only if $x_0' < x_0$ and $y_0' < y_0$.  Kopparty and Ravishankar  \cite{KR} generalized this strategy to obtain the \emph{spheres strategy}  for  $\R^n$, where the evader $e$ is caught by pursuers $\cop_1, \cop_2, \ldots , \cop_{\ell}$  if and only if $e$ starts in the interior of the convex hull of the pursuers. During pursuit,  $\cop_i$  guards an expanding  circular region $B_i^t$ so that after step $t$, the pursuer has either captured the evader, or the area of the region $B_i^t$  he guards  is larger by a constant amount than the area of $B_i^{t-1}.$ 

Isler et. al. \cite{isler05tro} adapted this strategy for one pursuer in a simply connected polygon.  
We opt for the name ``lion's strategy'' for this scenario because the ball $B(x, r) = \{ y \in P \mid d(x,y) \leq r \}$ usually does not look like a traditional sphere due to the boundary edges.
Lion's strategy proceeds as follows. Suppose that the players start at points $\cop^0$ and $\rob^0$. The pursuer fixes a point $z \in P$ chosen so that $\cop^0$ is on the (unique) shortest path from $z$ to $\rob^0$. 
For convenience, we define $\rob^{-1} = \rob^0$.  The pursuer movement in round $t \geq 1$ is described in Algorithm \ref{alg:lion}.

\begin{algorithm}
\caption{Lion's Strategy}
\label{alg:lion}
\begin{algorithmic}
\item[] Given the positions $\rob^{t-1}$, $\rob^{t}$, $\cop^t$, such that $\cop^t$ is on the shortest path $\Pi^{t-1}$ from $z$ to $\rob^{t-1}$. 
\item[] To compute $\cop^{t+1}$:
\item[] Let $\Pi^t$ be the shortest path between $z$ and $\rob^{t}$.
\item[]  Choose $\cop^{t+1}$ on $\Pi^t$ such that $d(\cop^{t+1}, \rob^{t})$ is minimized, subject to $d(\cop^{t}, \cop^{t+1}) \leq 1$. 
\end{algorithmic}
\end{algorithm}

We observe that a pursuer using lion's strategy  actually  guards a monotonically increasing subset of the environment, namely $B(z,d^t)$ where $d^t=d(z,\cop^t)$. In other words, once a pursuer guards an area, he also prevents recontamination of that region for the remainder of the game.  The validity of Lion's Strategy in a simply connected environment is proven in \cite{isler05tro}. The proof shows that the pursuer can move from $\Pi^{t-1}$ to $\Pi^t$, and second that the distance between pursuer and evader decreases with this move.

\begin{lemma}[\cite{isler05tro}]
\label{lemma:lion}
Lion's strategy is a winning strategy for a single pursuer in a simply connected polygonal environment.
\end{lemma}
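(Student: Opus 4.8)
The plan is to verify the two assertions flagged just after the algorithm: that on each round the pursuer can transfer from the old geodesic $\Pi^{t-1}$ to the new geodesic $\Pi^t$ in a single legal step, and that this transfer never increases the pursuer--evader distance, so that the guarded region $B(z,d^t)$ grows monotonically until $\rob$ is caught. Throughout I would use that in a simply connected polygon shortest paths are unique, so each $\Pi^t$ is well defined; I would also use that the metric balls $B(z,r)$ are geodesically convex (equivalently, that radial projection toward $z$ is $1$-Lipschitz), which is the geometric engine behind both assertions. The invariant I carry is that after round $t$ the pursuer sits on $\Pi^t$ at radius exactly $d^t=d(z,\cop^t)$ from $z$, while $\rob^t$ lies strictly outside the open ball of radius $d^t$ (it must, since $\cop^t\neq\rob^t$ lies on the geodesic from $z$ to $\rob^t$).

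First I would prove feasibility of the transfer. Let $w$ be the last point common to $\Pi^{t-1}$ and $\Pi^t$ as one travels out from $z$; by geodesic uniqueness the two paths agree on $[z,w]$ and are otherwise disjoint. If $\cop^t$ lies on the shared segment $[z,w]$ then it already lies on $\Pi^t$ and the pursuer simply advances toward $\rob^t$ by up to one unit. Otherwise $\cop^t$ lies past $w$, and I would send the pursuer to the point $q\in\Pi^t$ at radius $d^t$ from $z$. The key estimate is $d(\cop^t,q)\leq d(\rob^{t-1},\rob^t)\leq 1$: since $\cop^t$ and $q$ are precisely the radial projections of $\rob^{t-1}$ and $\rob^t$ onto the ball $B(z,d^t)$, the $1$-Lipschitz property of projection onto a geodesically convex set bounds their separation by the evader's displacement. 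Thus the move is legal and leaves $d^t$ unchanged, or larger if budget remains to advance along $\Pi^t$, establishing that the pursuer can always return to the evader's geodesic.

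Next I would establish the guarding / no-recontamination property. After the transfer the pursuer occupies the point where $\rob$'s geodesic to $z$ pierces the sphere $\partial B(z,d^t)$. Should the evader attempt to step into $B(z,d^t)$, its geodesic-piercing point moves by at most the evader's displacement $\leq 1$ (again by the Lipschitz bound), so the pursuer, matching that motion, intercepts $\rob$ at the crossing. Hence $\rob^t$ remains outside $B(z,d^t)$, the radius $d^t$ is non-decreasing, and the guarded region only grows, which is exactly the claim that no region is ever recontaminated.

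The delicate step, and the one I expect to be the main obstacle, is proving that capture actually occurs in finite time rather than the evader fleeing with distance forever increasing. Here I would use that $d^t\leq d(z,\rob^t)\leq\diam(P)$ is bounded and non-decreasing. Writing the gap as $g^t=d(\cop^t,\rob^t)=d(z,\rob^t)-d^t$, a short computation from the two moves shows that each round changes the gap by at most $2\,d(\rob^{t-1},\rob^t)-1$; so whenever the evader moves slowly the gap strictly shrinks, and whenever it moves quickly it must spend that speed increasing $d(z,\rob^t)$, which in turn drives $d^t$ upward. Since the radius cannot exceed $\diam(P)$, the evader can sustain the fast, distance-increasing behavior for only boundedly many rounds; thereafter the gap decreases by a definite amount each round and reaches $0$, i.e.\ the pursuer captures. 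Carefully bookkeeping this trade-off is where the real work lies, and it yields the $O(\diam(P))$ round bound for lion's strategy; this is precisely the analysis carried out in \cite{isler05tro}.
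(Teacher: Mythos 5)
First, a point of comparison: the paper offers no proof of this lemma at all --- it is quoted from \cite{isler05tro}, accompanied only by the two-clause sketch that the pursuer can move from $\Pi^{t-1}$ to $\Pi^t$, and that this move makes progress. Your reconstruction of the first clause (the transfer, via $1$-Lipschitz projection onto the geodesically convex ball $B(z,d^t)$) and of the guarding/no-recontamination property are both sound, and they parallel the projection framework of Section \ref{sec:proj}. So your attempt must be judged on its own merits, and its first two steps hold up.

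The finite-time capture argument, however --- which you rightly call the crux --- has a genuine gap, and the mechanism you propose is false. You claim a dichotomy: slow evader moves shrink the gap $g^t$ (true, via $g^{t+1} \le g^t + 2\delta - 1$ with $\delta = d(\rob^{t-1},\rob^t)$), while fast moves ``must spend that speed increasing $d(z,\rob^t)$,'' so that boundedness of $d(z,\cdot)$ limits how often they can occur. But a full-speed move need not increase $d(z,\rob^t)$ at all: the evader can move distance $1$ along a chord of the geodesic circle of radius $d(z,\rob^t)$ about $z$, keeping its distance from $z$ exactly constant, round after round, forever. Under the only estimate you invoke --- the $1$-Lipschitz bound $c \le \delta$ on the transfer cost --- such a move may consume the pursuer's entire unit budget ($c=\delta=1$), so $d^t$ need not grow and the gap need not shrink; nothing in your argument rules out the evader circling $z$ indefinitely. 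This is not a fixable bookkeeping slip: perpendicular flight is exactly how the man escapes the lion in continuous time (Besicovitch), and radial projection is $1$-Lipschitz there too, so no argument that uses only that property can prove the discrete-time lemma. What is missing is the quantitative, second-order estimate that actually drives the proof: when the evader at radius $R=d(z,\rob^t)$ moves distance $\delta$, the transfer cost is only about $(d^t/R)\,\delta$ (a similar-triangles/comparison argument along the fan of geodesics emanating from $z$, handled with care where geodesics bend at reflex vertices), so in every non-capturing round the pursuer banks a radial advance of at least roughly $1 - d^t/R = g^t/R \ge g^t/\diam(P)$. Since $d^t$ is nondecreasing and bounded by $\diam(P)$, only finitely many rounds (order $\diam(P)^2$ in this crude accounting) can pass before $g^t<1$, at which point the pursuer captures on its next turn. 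That contraction estimate, not Lipschitzness, is the engine of the argument in \cite{isler05tro}, and it is precisely the step your proposal lacks.
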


\section{Projections and guarding}
\label{sec:proj}

Let $P$ be a polygonal environment with obstacles, and let $Q \subset P$ be a sub-environment. 
In this section, we define a broad class of projection functions from $P$ onto $Q$.
These projection functions  play a crucial role in our pursuit strategies. Such projections for pursuit-evasion appear in  recent results of Bhadauria et.~al.~\cite{bhadauria+klein+isler+suri}, and we draw heavily from their viewpoint.  
After giving a general definition of a projection, we will prove two pursuit results:  (1) a pursuer can evict the evader from $Q$ by chasing (and capturing) the evader's projection; and (2) a pursuer who is collocated with the evader's projection onto $Q$ can keep the evader from re-entering $Q$.

\begin{mydef}
\label{def:proj}
A \emph{$(P,Q)$-projection}  is a function $\retract: P \rightarrow Q$ such that (1)
if $x \in Q$  then $\retract(x) = x$, and (2) for all $x,y \in P$, we have $d_{Q} (\retract(x), \retract(y)) \leq d_P(x,y)$.
\end{mydef}

\noindent
In other words, the $(P,Q)$-projection $\retract$ is the identity map on $Q$, and the mapping never increases the distances between points. Taking $x,y \in Q$, we find that $d_Q(x,y) \leq d_P(x,y)$, which means that $Q$ contains a shortest $(x,y)$-path (also known as a geodesic). In other words, if there is a projection function $\pi: P \rightarrow Q$ then $Q$ must be geodesically convex. 

\begin{figure}[ht]
\begin{center}
\begin{tabular}{cccc}
\begin{tikzpicture}[scale=1]

 \clip (0,0) -- (3,.5) -- (4,3) -- (2.5,4) -- (.5,2) -- (-.5, 3.5)  -- cycle;

\draw[draw=none, fill=gray!45] (4,3) -- (2.5,1.5) -- (3,.5) -- cycle;

\draw[draw=none, fill=gray!45]  (-.5,1.5) -- (1,1) -- (1.5,0) -- (0,0) -- cycle;

\draw[draw=none, fill=gray!45] (0,3) -- (1.5,2.5) -- (3,4) -- (2.5,4) -- cycle;

\draw[very thick] (0,0) -- (3,.5) -- (4,3) -- (2.5,4) -- (.5,2) -- (-.5, 3.5)  -- cycle;

\draw[thick] (1,1) -- (2.5, 1.5) -- (1.5, 2.5) -- cycle;

\draw[dashed] (1,1) -- (1.5,0);
\draw[dashed] (2.5,1.5) -- (3,.5);

\draw[dashed] (2.5,1.5) -- (4,3);
\draw[dashed] (1.5,2.5) -- (3,4);

\draw[dashed] (1.5,2.5) -- (0,3);
\draw[dashed] (1,1) -- (-.5,1.5);


\draw[fill] (-.1, 2.5) circle (2pt);
\draw[shorten <=4pt, -latex]  (-.1, 2.5) -- (.5,1.975) -- (1.25,1.75);

\draw[fill] (.25,.65) circle (2pt);
\draw[shorten <=4pt, -latex] (.25,.65) -- (1,1);

\draw[fill] (3.25,2.75) circle (2pt);
\draw[shorten <=4pt, -latex] (3.25,2.75) -- (2.25,1.75);

\node at (1.7, 1.7) {$Q$};

\end{tikzpicture}
& \qquad \qquad &
\begin{tikzpicture}[scale=.8]

\draw[thick] (0,0) -- (4,1) -- (6,3) -- (5,4) -- (0,4) -- (-2, 3) -- (-2,1) -- cycle;

\draw [very thick] (0,0) -- (-1,2) -- (0,4);

\node[above] at (.75,4) {$v = \pi_4=\pi_5$};
\node[below] at (0,0) {$u$};

\draw[fill] (0,4) circle (2pt) ;
\draw[fill] (0,0) circle (2pt);

\draw[thick, fill=gray] (-1,2) -- (2,1) -- (0,3) --cycle;

\draw[thick, fill=gray] (2,2.1) -- (3.25,2.75) -- (4,1.75) -- cycle;

\draw[dashed] (0,0) -- (2,1) -- (2,2.1) -- (3.1, 4);


\draw[fill] (1, .5) circle (2pt) ;
\draw[fill] (-.5, 1) circle (2pt) ;

\node[above] at (1, .5) {$x_1$};
\node[left] at (-.5, 1) {$\pi_1$};

\draw[fill] (2, 1) circle (2pt) ;
\draw[fill] (-1, 2) circle (2pt) ;

\node[right] at (2, 1) {$x_2$};
\node[left] at (-1, 2) {$\pi_2$};

\draw[fill] (2, 2.1) circle (2pt) ;
\draw[fill] (-.5, 3) circle (2pt) ;

\node[left] at (2, 2.1) {$x_3$};
\node[left] at (-.5, 3) {$\pi_3$};

\draw[fill] (2.55, 3.05) circle (2pt);

\node[left] at (2.55, 3.05) {$x_4$};

\draw[fill] (3.1, 4) circle (2pt) ;

\node[above] at (3.1, 4) {$x_5$};




\end{tikzpicture}

\\
(a) && (b) 

\end{tabular}
\end{center}

\caption{(a) The metric projection onto convex subregion $Q$. The shaded areas map to the vertices of $Q$. (b) A path projection onto the minimal $(u,v)$ path, shown in bold. Each point $x_i$ projects to point $\pi_i$. }
\label{fig:projection-examples}
\end{figure}
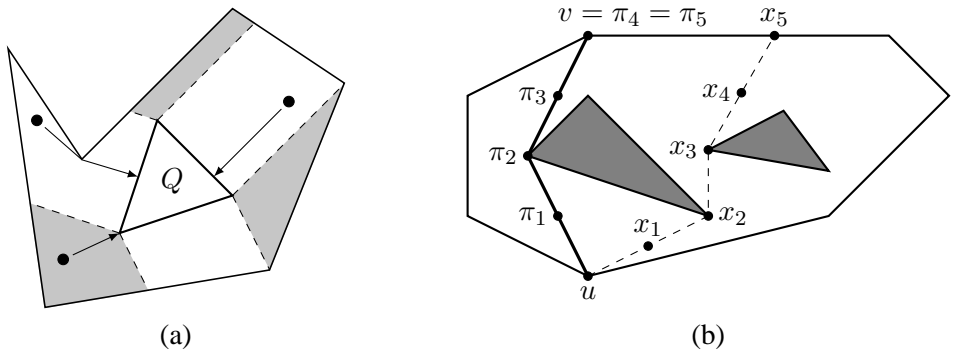

We now give a few examples of projections.
First, consider a simply connected polygonal environment $P$. Let $Q \subset P$ be a convex polygon.
If  $\partial P \cap \partial Q = \emptyset$,
then $P \backslash Q$ is a polygonal environment  with a single obstacle; otherwise the components of $P \backslash Q$ are all simply connected.
Define $\metricproj: P \rightarrow Q$ to be the mapping that takes $x \in P$ to the unique point $y \in Q$ such that $d_P(x,y) = \min_{z \in Q} d(x,z)$. 
Note that the convexity of $Q$ ensures that this mapping is well-defined.
Moreover, it is easy to see that $\metricproj$ is a $(P,Q)$-projection. 
Next, suppose that $P$ is not simply connected. Let $Q$ be a sub-environment with a  convex boundary such that every obstacle of $P$ is also contained in $Q$. In this case, the same function $\metricproj$ is still a $(P,Q)$-projection.
More broadly, when the sub-environment $Q$ is such that every $x \in P$ has a unique closest point in $Q$, we introduce the term \emph{metric projection}.  An example of a metric projection is shown in Figure \ref{fig:projection-examples}(a).

\begin{mydef}
Suppose that $Q \subset P$ is such that for every $x \in P$ there is a unique point $y \in Q$ achieving $d_P(x,y) = \min_{z \in Q} d(x,z)$. The projection $\metricproj$ induced by this mapping is the {\bf metric projection} from $P$ onto $Q$.
\end{mydef}

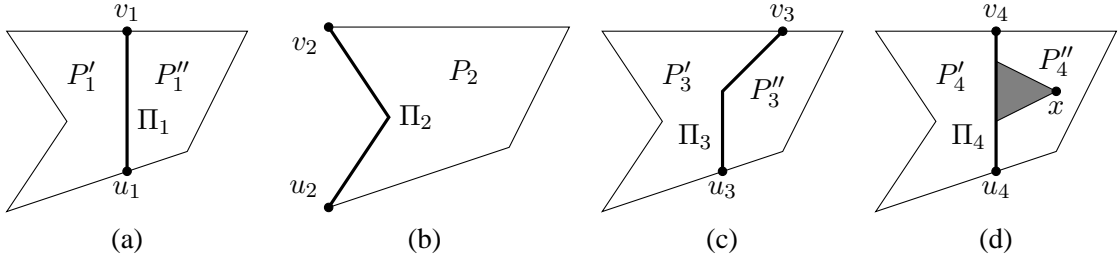
\begin{figure}[ht]
\begin{center}
\begin{tabular}{cccc}
\begin{tikzpicture}[scale=.8]

\draw(0,0) -- (3,1) -- (4,3) -- (0,3) -- (1,1.5) -- cycle;

\draw [very thick] (2,3) -- (2,.67);

\node[above] at (2,3) {$v_1$};
\node[below] at (2,.67) {$u_1$};

\draw[fill] (2,3) circle (2pt) ;
\draw[fill] (2,.67) circle (2pt);

\node[right] at (2,1.5) {$\Pi_1$};

\node at (1.25,2.25) {$P_1'$};
\node at (2.75,2.25) {$P_1''$};

\end{tikzpicture}
&
\begin{tikzpicture}[scale=.8]

\draw (0,0) -- (3,1) -- (4,3) -- (0,3) -- (1,1.5) -- cycle;

\draw [very thick] (0,0) -- (1,1.5) -- (0,3);

\node[below left] at (0,3) {$v_2$};
\node[above left] at (0,0) {$u_2$};

\draw[fill] (0,3) circle (2pt) ;
\draw[fill] (0,0) circle (2pt);

\node[right] at (1,1.5) {$\Pi_2$};

\node at (2.25,2.25) {$P_2$};

\end{tikzpicture}
&
\begin{tikzpicture}[scale=.8]

\draw (0,0) -- (3,1) -- (4,3) -- (0,3) -- (1,1.5) -- cycle;

\draw [very thick] (3,3) -- (2, 2) -- (2,.67);

\node[above] at (3,3) {$v_3$};
\node[below] at (2,.67) {$u_3$};

\draw[fill] (3,3) circle (2pt) ;
\draw[fill] (2,.67) circle (2pt);

\node[left] at (2,1.25) {$\Pi_3$};

\node at (1.25,2.25) {$P_3'$};
\node at (2.75,2) {$P_3''$};

\end{tikzpicture}
&
\begin{tikzpicture}[scale=.8]

\draw (0,0) -- (3,1) -- (4,3) -- (0,3) -- (1,1.5) -- cycle;

\draw [fill=gray] (2,2.5) -- (2,1.5) -- (3, 2) -- cycle;

\draw [very thick] (2,3) -- (2,.67);

\node[above] at (2,3) {$v_4$};
\node[below] at (2,.67) {$u_4$};

\draw[fill] (2,3) circle (2pt) ;
\draw[fill] (2,.67) circle (2pt);

\node[below] at (3,2) {$x$};
\draw[fill] (3,2) circle (2pt);

\node[left] at (2,1.25) {$\Pi_4$};

\node at (1.25,2.25) {$P_4'$};
\node at (3,2.5) {$P_4''$};

\end{tikzpicture}

\\
(a) & (b) & (c) & (d)

\end{tabular}
\end{center}

\caption{Some paths in polygonal  environments. There is a metric $(P, \Pi_i)$-projection only for $i=1,2$.
The paths $\Pi_1, \Pi_2, \Pi_4$ are minimal in their environments. The path $\Pi_3$ is only minimal for the sub-environment $P_3'$. There is no metric $(P_4, \Pi_4)$-projection since the closest point to $x$ is not unique.}
\label{fig:geomproj}
\end{figure}

Clearly, the metric projection $\metricproj$ is a $(P,Q)$-projection. However, there are many instances in which there is no well-defined metric projection because there are multiple nearest points; see Figure \ref{fig:geomproj} for some examples. Bhadauria, et al.~\cite{bhadauria+klein+isler+suri}  introduce a second type of projection that is less intuitive, but  applicable to a broader class of environments, including those with obstacles. 

\begin{mydef}[Minimal Path \cite{bhadauria+klein+isler+suri}]
\label{def:minpath}
Suppose that $\Pi$ is a path in  environment $P$ dividing it into two sub-environments, and $P_e$ is the sub-environment containing the evader $e$. We say that $\Pi$ is  \emph{minimal} with respect to $P_e$ if, for all points $x,z \in \Pi$ and $y \in (P_e \backslash \Pi)$, we have $d_{\Pi}(x,z) \leq d_{P_e}(x,y) + d_{P_e}(y,z)$.
\end{mydef}

For example, a shortest path  between $u,v \in P$ is always minimal with respect to $P$.
In Figure \ref{fig:geomproj}, the paths $\Pi_1,\Pi_2, \Pi_4$ are minimal with respect to the whole environment. 
There is no metric projection onto the minimal path $\Pi_4$: the obstacle results in the existence of two distinct points in $\Pi_4$ attaining the minimum distance to point $x$. We use the more robust  \emph{path projection} to deal with such an environment. An example of a path projection is shown in Figure \ref{fig:projection-examples}(b).

\begin{mydef}[Path Projection \cite{bhadauria+klein+isler+suri}]
\label{def:path-proj}
Let $u, v \in \partial P$ and let
 $\Pi_{u,v}$ be a minimal $(u,v)$-path in $P_e$.
 For $x \in P_e$ with $d(u,x) \leq d(u,v)$, define $\phi(x)$ to be the point on $\Pi_{u,v}$ at distance $d(u,x)$ from $u$. When $d(u,x) > d(u,v)$ define $\phi(x) =  v$. The mapping $\phi$ is called the \emph{path projection} of $P_e$ onto $\Pi_{u,v}$. Setting $Q$ to be the complement of ${P_e}$, we extend this to a projection $\phi: P \rightarrow Q$ by setting $\phi(x) = x$ for $x \in Q$. 
\end{mydef}

Note that we restrict $u,v$ to lie on the boundary of $P$. The proof that this mapping is a $(P,Q)$-projection is given in \cite{bhadauria+klein+isler+suri}. Considering Figure \ref{fig:geomproj}, we see that $\Pi_3$ is a minimal path in $P_3'$, but not in $P_3''$. Meanwhile, there are  path projections from each of $P_4',P_4''$ to $\Pi_4$. 
Restricting a pursuer's  movement to the evader's projection is a key component of the  pursuit strategies developed in the following sections.
Once a pursuer captures the evader's projection, the pursuer can maintain that colocation after every pursuer turn thereafter. Indeed, if the evader moves from $e$ to $e'$ then  $1 \geq d (e, e') \geq d (\pi(e), \pi(e'))$. We state this as a useful lemma.

\begin{lemma}
\label{lemma:shadow}
Let $Q \subset P$ be a sub environment with a projection $\pi: P \rightarrow Q$. Suppose that the pursuer $p$ starts at $\pi(e)$. After the evader moves from $e$ to $e'$, the pursuer can  move from $\pi(e)$ to $\pi(e')$. \proofend
\end{lemma}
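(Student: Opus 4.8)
The plan is to verify the two defining properties of a projection directly, since Lemma \ref{lemma:shadow} is really just an unpacking of Definition \ref{def:proj} applied to consecutive evader positions. First I would note that the pursuer starts the turn at $\pi(e)$, a point of $Q$, and wants to reach $\pi(e')$, also a point of $Q$. By the movement rule, such a move is legal precisely when $d_P(\pi(e), \pi(e')) \leq 1$, so the entire content of the lemma is the inequality $d_P(\pi(e),\pi(e')) \leq 1$.

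To establish this, I would chain three facts. Since $\pi(e),\pi(e') \in Q$ and $\pi$ is the identity on $Q$, distances within $Q$ dominate distances in the ambient $P$; more precisely, $d_P(\pi(e),\pi(e')) \leq d_Q(\pi(e),\pi(e'))$ because any path in $Q$ is also a path in $P$. Next, property (2) of Definition \ref{def:proj} gives the distance-nonincreasing bound $d_Q(\pi(e),\pi(e')) \leq d_P(e,e')$. Finally, the evader's own movement is legal, so by the game rules $e' \in B(e,1)$, which means $d_P(e,e') \leq 1$. Concatenating these yields
\[
d_P(\pi(e),\pi(e')) \;\leq\; d_Q(\pi(e),\pi(e')) \;\leq\; d_P(e,e') \;\leq\; 1,
\]
which is exactly the legality condition for the pursuer's move.

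There is no genuine obstacle here; the lemma is a one-line corollary of the projection axioms, and the only thing to be careful about is bookkeeping of which metric ($d_P$ versus $d_Q$) appears in each inequality and why the first step (passing from $d_P$ to $d_Q$) goes in the harmless direction. The informal computation $1 \geq d(e,e') \geq d(\pi(e),\pi(e'))$ in the paragraph preceding the lemma already records the essential chain; my proof simply makes explicit that the middle quantity one really needs to bound is $d_Q(\pi(e),\pi(e'))$ (the distance the pursuer must travel inside $Q$), and that this is controlled by $d_P(e,e')$ via the non-expansion property.
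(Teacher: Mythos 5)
Your proof is correct and is essentially the paper's own argument: the paper justifies the lemma with the one-line chain $1 \geq d(e,e') \geq d(\pi(e),\pi(e'))$ in the sentence preceding the statement, exactly the non-expansion-plus-legal-move reasoning you give. Your only addition is the careful distinction between $d_Q$ and $d_P$ (noting $d_P(\pi(e),\pi(e')) \leq d_Q(\pi(e),\pi(e'))$ since paths in $Q$ are paths in $P$), a harmless refinement the paper glosses over.
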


During our pursuit, we frequently  divide the environment into intersecting regions. The following lemma explains how to  patch together projections of overlapping subregions.

\begin{lemma}
\label{lemma:piecewise}
Let $P$ be a polygonal environment with sub-environments $Q, P_1, P_2$ such that  $P=P_1 \cup P_2$ and $Q \subset P_1 \cap P_2$. Suppose that  for $i=1,2$, we have  a  $(P_i, Q)$-projection $\retract_i$ with  $\retract_1(x) = \retract_2(x)$ for every $x \in P_i \cap P_2$.  Then the function  
\begin{displaymath}
   \retract(x) = \left\{
     \begin{array}{ll}
      x & x \in Q \\
       \retract_{1}(x) &  x \in P_{1} \backslash Q \\
       \retract_{2}(x) & x \in P_{2} \backslash   P_1
     \end{array}
   \right.
\end{displaymath}
is a projection from $P$ to $Q$.
\end{lemma}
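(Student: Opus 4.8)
The plan is to verify the two defining properties of a $(P,Q)$-projection directly from Definition~\ref{def:proj}. The identity property (property (1)) is immediate: by construction $\retract(x)=x$ for every $x \in Q$, matching the first case of the piecewise definition. So the entire content of the lemma is the distance-nonincreasing property (property (2)): for all $x,y \in P$ we must show $d_Q(\retract(x),\retract(y)) \le d_P(x,y)$. Before doing anything else, I would check that $\retract$ is well-defined, since the three cases of the definition are set up to partition $P$ (namely $Q$, then $P_1 \setminus Q$, then $P_2 \setminus P_1$) and the agreement hypothesis $\retract_1(x)=\retract_2(x)$ on the overlap $P_1 \cap P_2$ guarantees there is no conflict on the boundary between regions.

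The main step is to bound $d_Q(\retract(x),\retract(y))$ by $d_P(x,y)$. The natural approach is to fix a shortest $(x,y)$-path $\Gamma$ in $P$ and argue that its image under $\retract$ has controlled length in $Q$. Since $P = P_1 \cup P_2$, the path $\Gamma$ can be subdivided at the points where it crosses between $P_1$ and $P_2$ into finitely many consecutive subpaths, each of which lies entirely within a single $P_i$. On each such subpath with endpoints $a,b \in P_i$, we may apply the fact that $\retract_i$ is a $(P_i,Q)$-projection to get $d_Q(\retract_i(a),\retract_i(b)) \le d_{P_i}(a,b)$. Summing over the subpaths and using the triangle inequality in $Q$ together with the agreement of $\retract_1,\retract_2$ at every splitting point (which lies in $P_1 \cap P_2$, so the two partial projections coincide there and the telescoping is consistent) yields
\begin{displaymath}
d_Q(\retract(x),\retract(y)) \le \sum_i d_{P_i}(a_i,b_i) = \mathrm{length}(\Gamma) = d_P(x,y),
\end{displaymath}
where the equality in the middle holds because consecutive subpaths share endpoints and $\Gamma$ is a path in $P$. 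The crucial role of the overlap hypothesis is exactly to make the projected images of consecutive subpaths meet up in $Q$, so that the triangle-inequality bookkeeping goes through without a gap at each crossing point.

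The step I expect to be the main obstacle is the subdivision argument: one must ensure that a shortest $(x,y)$-path in $P$ really can be partitioned into finitely many arcs each contained in a single $P_i$, and that at every crossing point the point lies in $P_1 \cap P_2$ so that $\retract$ is unambiguously defined there. This requires a mild regularity observation about the geometry of the polygonal regions $P_1,P_2$ (their common boundary is a finite union of segments, so $\Gamma$ meets it in finitely many pieces). Once that finiteness is secured, the rest is routine telescoping via the triangle inequality in $Q$, using only that each $\retract_i$ is distance-nonincreasing on $P_i$ and that the two projections agree on the overlap.
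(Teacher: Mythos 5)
Your proposal is correct and follows essentially the same route as the paper's proof: decompose a shortest $(x,y)$-path in $P$ into finitely many subpaths lying alternately in $P_1$ and $P_2$, apply the distance-nonincreasing property of $\retract_i$ on each piece, and telescope via the triangle inequality in $Q$, using the agreement of $\retract_1$ and $\retract_2$ at the crossing points (which lie in $P_1 \cap P_2$) to make consecutive terms match up. Your added remarks on well-definedness and on why the subdivision is finite are points the paper leaves implicit, but they refine rather than change the argument.
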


\Proof 
For  points  $x \in P_1$ and $y \in P = P_1 \cup P_2$, let $\Pi$ represent a minimal path from $x$ to $y$ in $P$. 
Partition $\Pi$ into a finite collection of subpaths $\Pi = \{\Pi_1, \Pi_2, \ldots, \Pi_k \}$ where the odd indexed paths are in $P_1$ and the even indexed paths are in $P_2$. Let $u_{i-1}, u_i$ be the endpoints of $\Pi_i$, so that $u_0=x$ and $u_k=y$.  We consider the case that $k$ is even; the proof for odd $k$ is similar. We have
\begin{eqnarray*}
d_{Q} (\retract(x), \retract(y)) &\leq& d_{Q}(\retract_1(u_0), \retract_1(u_1)) +
d_{Q}(\retract_2(u_1), \retract_2(u_2)) + \cdots + d_{Q}(\retract_k(u_{k-1}), \retract_k(u_k)) \\
&\leq& d_{P_1}(u_0,u_1) + d_{P_2}(u_1,u_2) + \cdots + d_{P_k}(u_{k-1},v_k) 
= d_{P}(x,y). \qquad \Box
\end{eqnarray*}

By induction,  the analogous result holds for any finite collection of projections, with pairwise agreement on common intersections.

\begin{cor}
\label{cor: piecewise projection}
Let $P=P_1 \cup P_2 \cup \cdots P_k$ be a polygonal environment with    a sub-environment $P_0$ such that for all $1 \leq i \leq k$, we have $P_0 \subset  P_i$. Suppose that there exists a family of $(P_i,P_0)$-projections $\retract_i$ such that  for any $1 \leq i,j \leq k$, we have $\retract_i(x) = \retract_j(x)$ for every $x \in P_i \cap P_j$. Then the piecewise  function  
\begin{displaymath}
   \retract(x) = \left\{
     \begin{array}{ll}
      x & x \in P_0 \\
       \retract_{i}(x) &  x \in P_{i} \backslash (P_0 \cup \cdots \cup P_{i-1}), 1 \leq i \leq k 
     \end{array}
   \right.
\end{displaymath}
is a projection from $P$ to $P_0$. \qed
\end{cor}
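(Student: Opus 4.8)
The plan is to prove the statement by induction on $k$, using Lemma~\ref{lemma:piecewise} as the inductive engine. For $k=1$ the map $\retract$ is simply $\retract_1$, which is already a $(P,P_0)$-projection, and the case $k=2$ is precisely Lemma~\ref{lemma:piecewise}. So I would anchor the induction there and assume the result for any family of $k-1$ regions, then establish it for $k$.

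For the inductive step, set $P' = P_1 \cup \cdots \cup P_{k-1}$. Since $P_0 \subset P_i$ for every $i$, in particular $P_0 \subset P'$, and the subfamily $P_1, \ldots, P_{k-1}$ together with $P_0$ satisfies the hypotheses of the corollary: the pairwise-agreement condition for this subfamily is inherited from the full family. The induction hypothesis therefore yields a $(P',P_0)$-projection $\retract'$ with $\retract'(x)=x$ on $P_0$ and $\retract'(x)=\retract_i(x)$ for $x \in P_i \setminus (P_0 \cup \cdots \cup P_{i-1})$, $1 \le i \le k-1$. I would then apply Lemma~\ref{lemma:piecewise} to the two-region decomposition $P = P' \cup P_k$ with common sub-environment $P_0 \subset P' \cap P_k$ and projections $\retract'$ and $\retract_k$.

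The key step — and the one place where the pairwise-agreement hypothesis is genuinely needed — is verifying that $\retract'$ and $\retract_k$ agree on the overlap $P' \cap P_k$, which is exactly the compatibility condition Lemma~\ref{lemma:piecewise} demands. Given $x \in P' \cap P_k$, if $x \in P_0$ then both maps fix $x$. Otherwise let $i$ be the smallest index with $x \in P_i$; then $x \in P_i \setminus (P_0 \cup \cdots \cup P_{i-1})$, so $\retract'(x) = \retract_i(x)$, while $x \in P_i \cap P_k$ gives $\retract_i(x) = \retract_k(x)$ by pairwise agreement, whence $\retract'(x) = \retract_k(x)$. With this in hand, Lemma~\ref{lemma:piecewise} produces a $(P,P_0)$-projection equal to $x$ on $P_0$, to $\retract'(x)$ on $P' \setminus P_0$, and to $\retract_k(x)$ on $P_k \setminus P'$. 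A final routine check confirms this matches the piecewise formula in the statement: a point in $P_i \setminus (P_0 \cup \cdots \cup P_{i-1})$ with $i \le k-1$ lies in $P' \setminus P_0$ and is sent to $\retract'(x) = \retract_i(x)$, while a point in $P_k \setminus (P_0 \cup \cdots \cup P_{k-1})$ lies in $P_k \setminus P'$ and is sent to $\retract_k(x)$.

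I expect the only real subtlety to be the agreement verification above; once it is isolated, the induction is mechanical. It is worth remarking that pairwise agreement also makes $\retract$ genuinely well defined and independent of the ordering of the $P_i$: at any point lying in several regions all the relevant $\retract_i$ coincide, so the ``first region containing $x$'' bookkeeping in the piecewise definition merely selects which (equal) value to record.
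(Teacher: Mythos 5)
Your proof is correct and takes exactly the paper's route: the paper disposes of this corollary in one line (``by induction, the analogous result holds for any finite collection of projections, with pairwise agreement on common intersections''), i.e.\ precisely the induction on $k$ with Lemma~\ref{lemma:piecewise} as base case and inductive engine that you carry out. Your write-up simply fills in the details the paper leaves implicit, in particular the verification that $\retract'$ and $\retract_k$ agree on $P' \cap P_k$, which is indeed the only point where the pairwise-agreement hypothesis is used.
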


We use Lemma \ref{lemma:piecewise}  to extend our definition of projection to apply to a  loop $\Lambda$ (that is, closed path) that intersects the boundary of $P$ in at least two points $u, v$.  

\begin{mydef}[Loop, Minimal Loop]
Let $u,v \in \partial P$. 
A \emph{loop} $\Lambda$  consists of two internally disjoint $(u,v)$-paths $\Pi_1, \Pi_2$. These paths divide the environment $P$ into three sub-environments: the interior $Q$ between the two paths, and exterior environments $P_1, P_2$, bounded by $\Pi_1, \Pi_2$ respectively. The loop $\Lambda$ is $(u,v)$-\emph{minimal} when  $\Pi_1$ is a minimal path for $P_1$ and  $\Pi_1$ is a minimal path for $P_2$.
\end{mydef}


\begin{figure}[ht]
\begin{center}
\begin{tabular}{ccccc}
\begin{tikzpicture}[scale=.8]

\draw (0,0) -- (5,0) -- (4,3) -- (0,3) -- (.5,1.5) -- cycle;

\draw [fill=gray] (1.25,2.25) -- (1.25,.75) -- (3, 1.5) -- cycle;

\draw [very thick] (2,3) -- (1.25,2.25) -- (1.25, .75)  -- (2, 0)  -- (3,1.5)--cycle;

\node at (.9,1.25) {$\Pi_1$};
\node at (3.1,2) {$\Pi_2$};

\node at (.75,2.25) {$P_1$};
\node at (3.75,.75) {$P_2$};

\draw[fill] (2,3) circle (2pt) ;
\draw[fill] (2,0) circle (2pt) ;

\node[above] at (2,3) {$u$} ;
\node[below] at (2,0) {$v$} ;

\end{tikzpicture}
&

\begin{tikzpicture}[scale=.8]

\draw (0,0) -- (5,0) -- (4,3) -- (0,3) -- (.5,1.5) -- cycle;

\draw [fill=gray] (1.25,2.25) -- (1.25,.75) -- (3, 1.5) -- cycle;

\draw [very thick] (2,3) -- (1.25,2.25) -- (.5, 1.5) -- (1.25, .75)  -- (2, 0)  -- (3,1.5)--cycle;

\node at (1,.5) {$\Pi_1$};
\node at (3.1,2) {$\Pi_2$};

\node at (-.5,1.5) {$P_1$};
\node at (3.75,.75) {$P_2$};

\draw[fill] (2,3) circle (2pt) ;
\draw[fill] (2,0) circle (2pt) ;

\node[above] at (2,3) {$u$} ;
\node[below] at (2,0) {$v$} ;

\draw [-latex] (-.35, 1.65) -- (.5, 2.5);
\draw [-latex] (-.25, 1.25) -- (.5, .5);

\end{tikzpicture}

&
\begin{tikzpicture}[scale=.8]

\draw (0,0) -- (5,0) -- (4,3) -- (0,3) -- (.5,1.5) -- cycle;

\draw [fill=gray] (1.25,2.25) -- (1.25,.75) -- (3, 1.5) -- cycle;

\draw [very thick] (.5, 1.5) -- (1.25,2.25)   -- (3,1.5)  -- (1.25, .75) -- (0,0) --cycle;

\node at (-.75,.8) {$ P_1 = \Pi_1$};
\node at (2.5,2.25) {$\Pi_2$};

\node at (3.75,.75) {$P_2$};

\draw[fill] (.5,1.5) circle (2pt) ;
\draw[fill] (0,0) circle (2pt) ;

\node[left] at (.5,1.5) {$u$} ;
\node[below] at (0,0) {$v$} ;

\end{tikzpicture}

\\
(a) & (b) & (c)
\end{tabular}
\end{center}

\caption{Minimal loops in a polygonal  environment. 
For each loop,  $\Pi_1$  is a minimal path in  in $P_1$ and $\Pi_2$ is a minimal path in $P_2$. In (b), the sub-environment $P_1$ is disconnected. In (c), we have $P_1 = \Pi_1$ since the path $\Pi_1$ is part of the external boundary of the environment. }
\label{fig:loopproj}

\end{figure}
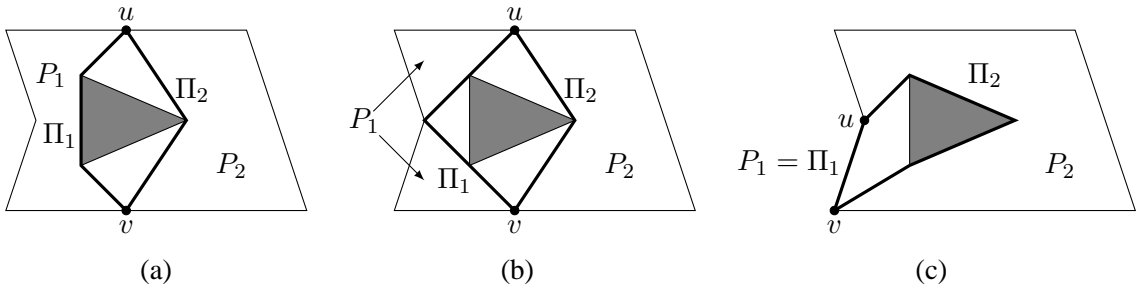

\begin{lemma}
\label{lemma:minloop}
Let $\Lambda$ be a $(u,v)$-minimal loop in polygonal environment $P$ where $u,v \in \partial P$.  Let  $Q$ be the sub-environment bounded by the loop  $\Lambda$. There is a projection $\phi: P \rightarrow Q$.
\end{lemma}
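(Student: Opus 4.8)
The plan is to realize $\phi$ by cutting $P$ along the loop and patching together two path projections via Lemma \ref{lemma:piecewise}. Write the loop as $\Lambda = \Pi_1 \cup \Pi_2$ with the three sub-environments $Q$ (the interior) and $P_1, P_2$ (the exteriors bounded by $\Pi_1$ and $\Pi_2$ respectively), exactly as in the definition of a minimal loop. Set $A_1 = Q \cup P_1$ and $A_2 = Q \cup P_2$. Then $A_1 \cup A_2 = P$, and because the loop separates the two exterior regions we have $P_1 \cap P_2 \subseteq \{u,v\} \subseteq Q$, whence $A_1 \cap A_2 = Q$. This decomposition is precisely the shape needed to feed into Lemma \ref{lemma:piecewise}.

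First I would construct an $(A_1,Q)$-projection. Inside the environment $A_1$, the path $\Pi_1$ divides $A_1$ into the evader side $P_1$ and its complement $Q$, and by the minimal-loop hypothesis $\Pi_1$ is minimal with respect to $P_1$; its endpoints $u,v$ lie on $\partial P \subseteq \partial A_1$. Hence Definition \ref{def:path-proj} (whose validity is established in \cite{bhadauria+klein+isler+suri}) yields a path projection $\phi_1 : A_1 \to Q$ that fixes $Q$ pointwise and carries $P_1$ onto $\Pi_1 \subseteq \partial Q$. Symmetrically, minimality of $\Pi_2$ with respect to $P_2$ gives an $(A_2,Q)$-projection $\phi_2 : A_2 \to Q$.

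Next I would patch. On the overlap $A_1 \cap A_2 = Q$ both $\phi_1$ and $\phi_2$ restrict to the identity, so they agree there, and the hypotheses of Lemma \ref{lemma:piecewise} hold verbatim with the triple $(A_1, A_2, Q)$ playing the role of $(P_1, P_2, Q)$. Applying that lemma produces a projection $\phi : P \to Q$ which is the identity on $Q$, equals $\phi_1$ on $P_1 = A_1 \setminus Q$, and equals $\phi_2$ on $P_2 = A_2 \setminus A_1$; this is the desired $(P,Q)$-projection.

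The only real work is the boundary bookkeeping, not any new distance estimate. The main point to nail down is that the decomposition is clean, i.e.\ that $P_1 \cap P_2 \subseteq Q$ so that $A_1 \cap A_2 = Q$ and Lemma \ref{lemma:piecewise} applies directly, together with the degenerate configurations in Figure \ref{fig:loopproj}. When an exterior region degenerates to the path itself (as in Figure \ref{fig:loopproj}(c), where $P_1 = \Pi_1$ lies on $\partial P$), the corresponding $\phi_i$ is simply the identity on $Q$ and the argument is unaffected; when an exterior region is disconnected (as in Figure \ref{fig:loopproj}(b)), I would rely on the fact that the path projection of \cite{bhadauria+klein+isler+suri} is stated for any minimal path irrespective of the connectivity of the evader side, so no special treatment is required.
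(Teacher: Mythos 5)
Your proposal is correct and follows essentially the same route as the paper's proof: take the path projections $\phi_i : P_i \cup Q \to Q$ afforded by the minimality of each $\Pi_i$, then glue them with Lemma~\ref{lemma:piecewise}. The extra bookkeeping you supply (that $A_1 \cap A_2 = Q$, agreement on the overlap, and the degenerate cases of Figure~\ref{fig:loopproj}) is exactly what the paper's terse two-sentence proof leaves implicit.
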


\begin{proof}

For $i=1,2$, let $P_i$ be the sub-environment of $P$ bounded by the minimal path $\Pi_i$, and let  $\phi_i : P_i \cup Q \rightarrow Q$ be the path projection. By Lemma \ref{lemma:piecewise}, we can combine these projections to get a projection $\phi: P \rightarrow Q$.
\end{proof}


In the remainder of this section, we explain how to use projections to evict the evader from a region, and then guard that region thereafter (that is, prevent the evader from re-entering). First, we show that if the pursuer is co-located with the evader's projection onto a minimal path, then the pursuer can prevent the evader from crossing it.

\begin{lemma}[Guarding Lemma]
\label{lemma:guard}
Let $Q \subset P$ with the projection $\retract: P \rightarrow Q$. Suppose that $\cop^1 = \retract(\rob^0)$.  Then the pursuer can maintain $\cop^{i+1} = \retract(\rob^{i})$ for $i \geq 1$. Furthermore, if the evader moves so that a shortest path from $\rob^{t-1}$ to $\rob^t$ intersects $Q$, then $d_P(\cop^t,\rob^t) \leq d_P(\rob^{t-1},\rob^t) \leq 1$, so the pursuer can capture the evader at time $t+1$.
\end{lemma}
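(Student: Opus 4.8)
The plan is to separate the two assertions. For the first claim (that the pursuer can maintain $\cop^{i+1} = \retract(\rob^i)$), I would argue by induction on $i$, with the base case $i = 1$ supplied by the hypothesis $\cop^1 = \retract(\rob^0)$. For the inductive step, suppose the pursuer has reached $\cop^i = \retract(\rob^{i-1})$ at the end of round $i$. After the evader advances to $\rob^i$, Lemma \ref{lemma:shadow} (equivalently, the non-expansiveness condition (2) of Definition \ref{def:proj}) gives $d_Q(\retract(\rob^{i-1}), \retract(\rob^i)) \leq d_P(\rob^{i-1}, \rob^i) \leq 1$, so the pursuer can travel inside $Q$ from $\cop^i$ to $\retract(\rob^i)$ in a single move, establishing $\cop^{i+1} = \retract(\rob^i)$ as desired.

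For the capture claim, the key observation is that at the end of round $t$ the pursuer sits at $\cop^t = \retract(\rob^{t-1})$ (the case $t=1$ being the hypothesis, and $t \geq 2$ following from the first part with $i = t-1$), while the evader has just traversed a shortest $(\rob^{t-1}, \rob^t)$-path $\Pi$ that meets $Q$ at some point $w$. I would first record two facts about $w$: since $w \in Q$ we have $\retract(w) = w$, and since any path lying in $Q$ is also a path in $P$, distances can only shrink when we pass from $Q$ to $P$, i.e.\ $d_P(a,b) \leq d_Q(a,b)$ for $a,b \in Q$. Combining the non-expansiveness of $\retract$ with this last inequality yields the central estimate
\[
d_P(\retract(\rob^{t-1}), w) \;\leq\; d_Q(\retract(\rob^{t-1}), \retract(w)) \;\leq\; d_P(\rob^{t-1}, w).
\]

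I would then finish with the triangle inequality and the fact that $w$ lies on the geodesic $\Pi$:
\[
d_P(\cop^t, \rob^t) \leq d_P(\retract(\rob^{t-1}), w) + d_P(w, \rob^t) \leq d_P(\rob^{t-1}, w) + d_P(w, \rob^t) = d_P(\rob^{t-1}, \rob^t) \leq 1.
\]
Since $\cop^t$ is now within distance $1$ of $\rob^t$, in round $t+1$ the pursuer simply moves onto $\rob^t$, and the capture test $\cop^{t+1} = \rob^{(t+1)-1}$ succeeds.

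I expect the only genuine obstacle to be the capture estimate, and within it the inequality $d_P(\retract(\rob^{t-1}), w) \leq d_P(\rob^{t-1}, w)$: this is where one must be careful to distinguish $d_Q$ from $d_P$ and to invoke $\retract(w) = w$, rather than applying non-expansiveness directly to $d_P$. Everything else is bookkeeping about the turn order, chiefly verifying that the pursuer's position at the end of round $t$ is indeed the projection of the evader's position at the end of round $t-1$.
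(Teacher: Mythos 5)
Your proposal is correct and follows essentially the same argument as the paper's proof: induction via the non-expansiveness of $\retract$ for the maintenance claim, and a triangle inequality through the point where the evader's shortest path meets $Q$ for the capture claim. The only difference is cosmetic: you spell out explicitly the chain $d_P(\retract(\rob^{t-1}), w) \leq d_Q(\retract(\rob^{t-1}), \retract(w)) \leq d_P(\rob^{t-1}, w)$, which the paper leaves implicit by writing $d_Q(\cop^t, x)$ directly in its triangle inequality.
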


\begin{proof}
The first claim follows easily by induction, since 
$$d_Q(\cop^i, \retract(\rob^{i})) = d_Q(\retract(\rob^{i-1}), \retract(\rob^{i})) \leq d_P(\rob^{i-1}, \rob^{i}) \leq 1,$$
meaning that the pursuer can remain on the projection of the evader.
For the second claim, suppose that the shortest path from $\rob^{t-1}$ to $\rob^t$ includes the point $x \in Q$. Then 
$$
d_P(\cop^t, \rob^t) \leq d_Q(\cop^t,x) + d_P(x, \rob^t) \leq d_P(\rob^{t-1},x) + d_P(x,\rob^t) = d_P(\rob^{t-1},\rob^t) \leq 1.
$$
Therefore the pursuer can move to $\rob^t$ on his next turn.
 \end{proof}

When a pursuer $\cop$ follows the strategy in Lemma \ref{lemma:guard}, we say that $\cop$  \emph{guards} $Q$ with respect to the projection $\retract$.
Note that we must have $\cop^t=\retract(\rob^{t-1})$ before the pursuer can start to guard $Q$. After that, the pursuer never leaves the region he guards.  If the evader travels into  $Q$ (or across $Q$ in a single turn), then the pursuer apprehends the evader on his next turn. One example of this last kind of capture is given in the following corollary.

\begin{cor}
\label{cor:guard}
Suppose that pursuer $\cop$ is guarding subregion $Q \subset P$ with respect to $\retract: P \rightarrow Q$. Suppose further that $P \backslash Q$ is the disjoint union of simply connected components $R_1, R_2, \ldots, R_s$. If the evader moves from $R_i$ to $R_j$, where $i \neq j$, then the pursuer can catch him on his next turn. \proofend
\end{cor}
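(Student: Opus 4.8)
The plan is to reduce the statement directly to the second part of the Guarding Lemma (Lemma~\ref{lemma:guard}). Since $\cop$ is guarding $Q$ with respect to $\retract$, by definition it maintains $\cop^{t} = \retract(\rob^{t-1})$ after each of its turns, so the first hypothesis of that lemma is already in force. The only thing left to verify is the hypothesis of the \emph{second} claim of Lemma~\ref{lemma:guard}: that when the evader moves from $\rob^{t-1} \in R_i$ to $\rob^{t} \in R_j$, some shortest $(\rob^{t-1}, \rob^{t})$-path passes through $Q$.

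To see this, I would fix any shortest $(\rob^{t-1}, \rob^{t})$-path $\Pi$ in $P$. The regions $R_1, \dots, R_s$ are, by hypothesis, exactly the connected components of $P \setminus Q$. Since $i \neq j$, the endpoints $\rob^{t-1}$ and $\rob^{t}$ lie in two distinct components of $P \setminus Q$, and so no path contained entirely in $P \setminus Q$ can join them. Because $\Pi$ is a connected path from $\rob^{t-1}$ to $\rob^{t}$ inside $P$, it must therefore meet $Q$; that is, there is a point $x \in \Pi \cap Q$.

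With this point $x$ in hand, the second claim of Lemma~\ref{lemma:guard} applies verbatim: since the pursuer sits at $\cop^{t} = \retract(\rob^{t-1})$ and the evader's move crosses $Q$, we obtain $d_P(\cop^{t}, \rob^{t}) \leq d_P(\rob^{t-1}, \rob^{t}) \leq 1$, so $\cop$ can step onto $\rob^{t}$ on its next turn. The entire mathematical content is the separation observation of the middle paragraph, which is immediate from the definition of a connected component; everything else is bookkeeping inherited from the Guarding Lemma, so I do not expect a genuine obstacle here. The one point I would state carefully is that the evader is deemed to have ``moved from $R_i$ to $R_j$'' on the basis of its recorded positions $\rob^{t-1}$ and $\rob^{t}$, so it suffices to reason about a shortest path between these two endpoints rather than about the evader's (possibly more intricate) physical trajectory within the turn.
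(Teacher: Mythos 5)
Your proof is correct and is exactly the argument the paper intends: the paper states this corollary without proof, treating it as immediate from Lemma \ref{lemma:guard} precisely because any path between distinct components of $P \backslash Q$ must meet $Q$, which is the separation observation you spell out. Your additional care about reasoning from the recorded positions $\rob^{t-1}, \rob^t$ rather than the evader's physical trajectory matches the phrasing of the Guarding Lemma, so nothing is missing.
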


Our final pursuit lemma asserts that a single pursuer can evict the evader from any simply connected subregion with a valid projection function. We refer to this as \emph{clearing the subregion}.

\begin{lemma}
[Clearing Lemma]
\label{lemma:clear}
Let $Q$ be a simply connected subenvironment  of $P$ with  
projection
$\retract: P \rightarrow Q$.
In a finite number of moves, one pursuer can either capture the evader or guard $Q$.
\end{lemma}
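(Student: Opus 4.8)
The plan is to have the pursuer chase the \emph{shadow} $\retract(\rob)$ of the evader inside $Q$, capture that shadow by means of Lion's strategy, and then convert the shadow-capture into a guard of $Q$. In other words, I reduce the Clearing Lemma to the two tools already at hand: Lion's strategy in a simply connected environment (Lemma \ref{lemma:lion}) and the Guarding Lemma (Lemma \ref{lemma:guard}).

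First I would check that $\retract(\rob)$ is itself a legal evader confined to the simply connected region $Q$. Indeed, by Definition \ref{def:proj} (as recorded in Lemma \ref{lemma:shadow}), if the evader moves from $\rob^{t-1}$ to $\rob^{t}$ then $d_Q(\retract(\rob^{t-1}),\retract(\rob^{t})) \leq d_P(\rob^{t-1},\rob^{t}) \leq 1$, so the shadow moves by at most $1$ in $d_Q$ each round, exactly the speed constraint an evader in $Q$ must satisfy. Moreover, since a $(P,Q)$-projection exists, $Q$ is geodesically convex, so $d_Q$ and $d_P$ agree on $Q$; hence any move the pursuer makes while remaining in $Q$ (of length at most $1$ in $d_Q$) is a legal move in $P$ as well.

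Next I would run Lion's strategy inside $Q$ against the shadow. The pursuer begins at an anchor point $z \in Q$; taking $\cop^0 = z$ places the pursuer trivially on the (unique, since $Q$ is simply connected) shortest path from $z$ to $\retract(\rob^0)$, which is the initialization Algorithm \ref{alg:lion} requires. Because the pursuer has full information, at each round it can compute $\retract(\rob^{t})$ and respond exactly as the algorithm prescribes. Treating the shadow as the evader, Lemma \ref{lemma:lion} guarantees that after finitely many rounds the pursuer becomes co-located with it, i.e. $\cop^{t} = \retract(\rob^{t-1})$ for some $t$. I then split into cases at this moment: if $\rob^{t-1} \in Q$ then $\retract(\rob^{t-1}) = \rob^{t-1}$, so $\cop^{t} = \rob^{t-1}$ and the evader is captured; otherwise the pursuer sits at $\retract(\rob^{t-1})$ with the evader outside $Q$, and relabeling this as the initial guarding configuration lets the Guarding Lemma (Lemma \ref{lemma:guard}) take over, so the pursuer guards $Q$ thereafter.

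The step I expect to be the main obstacle is the reduction itself, namely making precise that the shadow is a bona fide evader for which Lion's strategy remains valid. The two points to verify with care are (i) that confining the pursuer to $Q$ costs it no speed, which is exactly where geodesic convexity ($d_Q = d_P$ on $Q$) is needed, and (ii) that the shortest paths Lion's strategy relies on are unique, which holds because $Q$ is simply connected. Once these are established, Lemma \ref{lemma:lion} supplies the finite-time shadow-capture and Lemma \ref{lemma:guard} supplies the guard, completing the argument.
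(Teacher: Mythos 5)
Your proposal is correct and follows essentially the same route as the paper: run Lion's strategy in $Q$ against the projection $\retract(\rob)$, then either the evader is caught (if $\rob = \retract(\rob)$) or the pursuer is positioned to invoke the Guarding Lemma (Lemma \ref{lemma:guard}). The extra details you verify---that the shadow moves at most $1$ per round, that $d_Q=d_P$ on $Q$ by geodesic convexity, and that shortest paths in $Q$ are unique---are exactly the facts the paper leaves implicit (via Lemma \ref{lemma:shadow} and Lemma \ref{lemma:lion}), so they strengthen rather than diverge from the published argument.
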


\begin{proof}
The pursuer executes lion's strategy in the simply connected region $Q$, chasing after $\retract(\rob)$. By Lemma \ref{lemma:lion},  a single pursuer can capture an evader in a simply connected region in finite time, say  $\cop^t = \retract(\rob^{t-1})$. 
If $\rob^{t-1} = \retract(\rob^{t-1})$, then the evader is caught. Otherwise, the pursuer has attained position to guard $Q$ from this time forward, as described by Lemma \ref{lemma:guard}.
\end{proof}

\section{Environments with one obstacle}
\label{sec:obstacle}

We use the projection framework from Section \ref{sec:proj} to prove Theorem \ref{thm:one-hole}.
Two pursuers are always enough to catch an evader when there is a single obstacle in the environment. Indeed, $\cop_1$ can move to guard a shortest path from the obstacle to the boundary of $P$. This  makes the environment simply connected, so $\cop_2$ can catch $\rob$ using lion's strategy. However, if the obstacle $H$ is small enough, one pursuer can actually catch the evader. 
The critical factor is the length of the boundary of the convex hull $J=\hull(H)$. Theorem \ref{thm:one-hole} states that the  environment is one-pursuer-win if and only if this boundary length is smaller than 2. The theorem follows directly from Lemmas \ref{lemma:robberwin} and \ref{lemma:copwin}  below.

\begin{lemma}
\label{lemma:robberwin}
Suppose that $P$ is a polygonal environment with one obstacle $H$ whose convex hull $J=\hull(H)$ has perimeter $\ell > 2$. The evader has a winning strategy against a single pursuer.
\end{lemma}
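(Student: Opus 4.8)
The plan is to let the evader exploit the hole topologically, circulating on a fixed closed loop around $H$ and always staying on the far side of the obstacle from the pursuer. Fix $\Lambda$ to be a shortest closed curve in $P$ that encloses $H$ (the taut string around the hole); since any such curve must enclose the convex hull $J=\hull(H)$, its length is $L \ge \ell > 2$. Parametrize $\Lambda$ by arc length, so it is a cycle of circumference $L$. Two facts I would record first: (i) every point of $\Lambda$ lies in $P$ and a unit move along $\Lambda$ is legal for the evader; and (ii) for $x,y\in\Lambda$ the geodesic distance $d_P(x,y)$ equals the shorter of the two arcs of $\Lambda$ between them. Fact (ii) is the taut, locally-shortest property of $\Lambda$, and it is where convexity of $J$ enters (no interior shortcut beats the boundary arc, and the pockets between $\partial H$ and $\partial J$ only lengthen detours). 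By the game's setup the evader chooses his start after the pursuer, so he may begin at the point of $\Lambda$ antipodal to $\cop^0$, guaranteeing $d_P(\cop^0,\rob^0) > 1$.

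The key object is the pursuer's \emph{capture arc} $I_{\cop} = \{\, q \in \Lambda : d_P(\cop,q) \le 1 \,\}$, the loop points the pursuer could occupy next turn. I would show $I_{\cop}$ is a single sub-arc of $\Lambda$ of length at most $2$. Connectedness follows because $q \mapsto d_P(\cop,q)$ is unimodal along the taut loop (it decreases to the pursuer's nearest loop point and increases away from it, the far side being at distance $\ge L/2 > 1$). The length bound follows from fact (ii): if $q_1,q_2$ are the endpoints of $I_{\cop}$ then $d_P(q_1,q_2) \le d_P(q_1,\cop)+d_P(\cop,q_2) = 2$, and since $I_{\cop}$ is the shorter arc joining them, its arc length equals $d_P(q_1,q_2) \le 2$. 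Because $L > 2$, the complementary \emph{safe arc} $\Lambda \setminus I_{\cop}$ has length at least $L-2 > 0$; this positive slack is exactly the payoff of the hypothesis $\ell > 2$, and every point strictly inside the safe arc satisfies $d_P(\cop,\cdot) > 1$.

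The evader's strategy is then to keep $\rob^t$ strictly inside the safe arc, each turn moving a unit distance along $\Lambda$ in the direction that increases $d_P(\cop^t,\cdot)$. The crux is to check that one unit move always restores strict safety after the pursuer's turn. In the model case where the pursuer stays on $\Lambda$, this is exactly turn-based pursuit on a cycle of circumference $L>2$: writing the two arc-gaps from evader to pursuer as $a$ and $L-a$, a unit pursuer move changes each by at most $1$, and a unit retreat by the evader in the away direction returns both gaps to $L/2 > 1$. Thus $\min(a,L-a) > 1$ is an invariant and the evader is never caught, and this is precisely where $L \ge \ell > 2$ is needed.

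The main obstacle is the off-loop pursuer. When $\cop$ leaves $\Lambda$, its nearest loop point (the center of $I_{\cop}$) can shift in arc length faster than the pursuer's own unit speed, so the evader cannot always re-antipodalize with a single move. I would resolve this by arguing that leaving the loop never helps the pursuer: for $\cop$ off $\Lambda$ every distance $d_P(\cop,\cdot)$ strictly exceeds the distance from its nearest loop point, so $I_{\cop}$ is strictly shorter than $2$ and the safe arc is correspondingly longer, and this extra slack dominates the excess arc-speed of the center, letting the unit retreat still land strictly inside the safe arc. Making this trade-off quantitative --- bounding the per-turn arc-displacement of the safe arc against the slack $L-|I_{\cop}|$ and handling the degenerate boundary case $|I_{\cop}| = 2$ --- is the technical heart of the argument; the remaining bookkeeping (initial placement, legality of unit moves along $\Lambda$, and the unimodality claim) is routine.
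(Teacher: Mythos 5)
Your strategy is the same one the paper uses --- the evader circulates on the hull boundary $\partial J$, always staying on the far side of the obstacle --- and your preliminary observations (the taut-loop fact (ii), the initial antipodal placement, the on-loop model case of pursuit on a cycle of circumference $L>2$) are sound. But the proposal has a genuine gap exactly where you flag ``the technical heart'': you never prove that the off-loop pursuer can be handled, you only assert that ``the extra slack dominates the excess arc-speed of the center.'' That assertion \emph{is} the content of the lemma. Your fact (ii) controls $d_P$ only for pairs of points on $\Lambda$; it says nothing about how far the nearest loop point (the center of your capture arc $I_{\cop}$) can slide along $\Lambda$ when the pursuer takes a unit step off the loop, and the proposed slack-versus-speed trade-off is never made quantitative --- note that the slack $2-|I_{\cop}|$ can be arbitrarily small when the pursuer hovers just off $\Lambda$, so a uniform inequality is genuinely needed, not just the qualitative remark that leaving the loop costs the pursuer something. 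Even the connectedness of $I_{\cop}$ (your unimodality claim) is unproven for off-loop pursuers, e.g.\ for a pursuer inside a pocket of $J\setminus H$ near a pinch vertex, where nearest loop points need not be unique. As written, the argument establishes the result only against a pursuer who confines himself to $\Lambda$.

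The paper closes precisely this gap with its projection machinery: it builds a single $1$-Lipschitz retraction $\pi: P \rightarrow \partial J$ by taking the metric projection onto the convex hull $J$ for points outside $J$, path projections onto the hull edges $\Pi_i$ inside each pocket $Q_i$ of $J \setminus H$, and gluing these by Corollary~\ref{cor: piecewise projection}, with a short displayed computation checking the Lipschitz property across region boundaries. The resulting inequality $d_{\partial J}(\pi(x),\pi(y)) \leq d_P(x,y)$ for \emph{all} $x,y \in P$ is exactly the rigorous form of your ``leaving the loop never helps'' claim: the pursuer's shadow on the loop moves at most unit arc length per turn no matter where the pursuer is. With it, the evader simply stays antipodal to $\pi(\cop)$ along $\partial J$, which is always a legal move, and is never captured since $d_P(\cop,\rob) \geq d_{\partial J}(\pi(\cop),\pi(\rob)) = \ell/2 > 1$. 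If you want to complete your write-up, the shortest route is to replace the capture-arc bookkeeping with this projection: prove that the exterior nearest-point map is $1$-Lipschitz into the arc-length metric (project an exterior geodesic onto $J$; it is the hole blocking shortcuts, not convexity alone, that makes this work), handle the pockets with path projections, and patch the pieces together.
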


\begin{proof}
For simplicity of exposition, we assume that $\hull(H)$ does not intersect the boundary of the environment. The proof can be adapted for this case,  but we must redefine the convex hull of the obstacle in the natural way to handle the interaction with the external boundary.
We first consider the case where the obstacle $H$ is convex. Let $\rho: P \rightarrow H$ be the metric projection. The mapping $\rho$ projects every point in $P \backslash H$ onto  $\partial H$.
We prove that the evader can always guarantee that  $d(\cop^i, \rob^i)  > d(\rho(\cop^i), \rob^i) = \ell /2 > 1$ for $i \geq 0$, which   means that the pursuer can never catch the evader.

The game begins when the pursuer chooses his location $\cop^0$ in $P$. The evader responds by placing himself at the unique point $\rob^0$ on $\partial H$ that is distance $\ell/2$ from $\rho(\cop^0)$. 
Proceeding by induction, assume that $d(\rho(\cop^i),\rob^i) = \ell/2$. The pursuer moves to a new location $\cop^{i+1}$ with $1 \geq d(\cop^i, \cop^{i+1}) \geq d(\rho(\cop^i), \rho(\cop^{i+1}))$. The evader responds by moving from $\rob^i$ to the unique point on the perimeter of $H$ that is distance $\ell/2$ from $\rho(\cop^{i+1})$. Of course, $d(\rob^i, \rob^{i+1}) = d(\rho(\cop^i), \rho(\cop^{i+1})) \leq d(\cop^i, \cop^{i+1}) \leq 1$, so the evader can attain this position, evading capture.

Next, we consider the case where the obstacle $H$ is not convex, but has convex hull  $J = \hull(H)$ with perimeter $\ell > 2$. 
Analogous to the above case, the evader will restrict his movement to the perimeter $\partial J$ of the convex region $J$.  As long as the pursuer does not enter $J$, the argument for convex obstacles shows that the evader can remain at distance $\ell/2$ from the pursuer projection. Meanwhile, entering $J$ (or more precisely, some component of $J \backslash H$) is worse for the pursuer than staying on the boundary of $J$. We make this more precise by using projections.

Let $\rho: P \rightarrow J$ be the metric projection. Next, we define projections for the areas in $J$. Let the vertices of $J$ be $v_0, v_1, \ldots, v_{k-1}$, indexed counterclockwise. Let $\Pi_i$ be the line segment joining $v_i, v_{i+1}$ (here the index is modulo $k$).
 Let $Q_i$ be the simply connected component of $J \backslash H$ whose boundary includes $\Pi_i$.
  (Note that $Q_i = \Pi_i$  when the segment between $v_i, v_{i+1}$ is  part of obstacle $H$.)
  Let $\phi_i: Q_i \rightarrow \Pi_i$ be the path projection from $Q_i$ to $\Pi_i$, anchored at $v_i$. Finally, we define the piecewise function $f: P \rightarrow \partial J$ as
\[
\pi(x) = \left\{
\begin{array}{cc}
x & x \in \partial J, \\
\rho(x) & x \in P \backslash J, \\
\phi_i(x) & x \in Q_i \backslash \Pi_i.
\end{array}
\right.
\]
This piecewise function is a projection from $P$ to $\partial J$ by Corollary~\ref{cor: piecewise projection}.
Suppose that $\cop^t \in P \backslash J$ and $\cop^{t+1} \in Q_i$. Let $x \in \Pi_i$ be on a minimal $(\cop^t,\cop^{t+1})$-path. Then
\[
1 \geq d(\cop^t, \cop^{t+1}) = d(\cop^t, x) + d(x, \cop^{t+1}) \geq d(\pi(\cop^t),x) + d(x, \pi(\cop^{t+1})) = d(\pi(\cop^t), \pi(\cop^{t+1})),
\]
where the last equality holds because  $x$ is on  a shortest path from $\pi(\cop^t)$ to $\pi(\cop^{t+1})$ in $\partial J$.
Therefore, the evader can move to the point at distance $\ell/2$ from $\pi(\cop^{t+1})$. The analogous argument holds when the pursuer moves from $Q_i$ to $P \backslash J$, or from $Q_i$ to $Q_j$.
\end{proof}

We now turn to the pursuer-win situation.
The proof of this result rests mainly on the following lemma, which shows that we can transition from guarding a line segment $\Pi$  to lion's strategy in such a way that the evader cannot cross $\Pi$ without being caught.

\begin{lemma}
\label{lemma:transition}
Let $\Pi$ be a line segment in $P$ that connects boundary points $u,v \in \partial P$. Let $P_e$ be the sub-environment of $P \backslash \Pi$ containing the evader, and let $Q  = P \backslash P_e$. Let $\rho : P_e \rightarrow Q$ be the metric projection. If the pursuer starts at $\cop^0= \rho (\rob^0)$ then 
 in a single move, the pursuer can transition to lion's strategy, keeping the line segment $\Pi$ within his guarded region. 
\end{lemma}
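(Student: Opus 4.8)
The plan is to run lion's strategy with a carefully chosen anchor point $z$ placed deep inside $Q$, so that the growing guarded ball $B(z,d^t)$ (with $d^t = d(z,\cop^t)$) engulfs $\Pi$ after a single transition step and never releases it. First I would pin down the geometry of $\cop^0 = \rho(\rob^0)$. Since $\rho$ is the metric projection onto $Q$ and the only way to pass from $P_e$ into $Q$ is across $\Pi$, the image $\cop^0$ lies on $\Pi$, and by the nearest-point characterization of the metric projection onto a straight segment the geodesic $[\rob^0,\cop^0]$ meets $\Pi$ orthogonally. Extending this normal from $\cop^0$ into $Q$, I would place the lion anchor $z$ on it; this guarantees that $\cop^0$ sits on the unique shortest path from $z$ to $\rob^0$, which is exactly the initialization required by Algorithm~\ref{alg:lion}.

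For the single move, the pursuer steps from $\cop^0$ a distance at most $1$ along this normal toward $\rob^0$, reaching $\cop^1$. Because the evader has not yet moved, $\cop^1$ still lies on the shortest path from $z$ to $\rob^0$, so this legitimately starts lion's strategy at round $1$. I would then invoke the monotonicity noted after Lemma~\ref{lemma:lion}: the guarded region $B(z,d^t)$ is nondecreasing in $t$. Hence it suffices to verify the single containment $\Pi \subseteq B(z,d(z,\cop^1))$. Once this holds, $\Pi \subseteq B(z,d^t)$ for every later $t$, and the guarding property of lion's strategy (any evader trajectory entering $B(z,d^t)$ is intercepted) shows the evader cannot cross $\Pi$ from $P_e$ into $Q$ without being captured. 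That is precisely the promised transition.

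The main obstacle is the containment $\Pi \subseteq B(z,d(z,\cop^1))$, and it needs care because $\cop^0$ is the foot of the perpendicular and hence the point of $\Pi$ \emph{nearest} to $z$: the endpoints $u,v$ are the farthest points, so a lazy choice (taking $\cop^1=\cop^0$, or taking $z$ close to $\Pi$) leaves $u$ and $v$ outside the ball. I would resolve this by pushing $z$ far along the inward normal, so that $B(z,d^t)$ closely approximates the half-plane behind $\Pi$, and by spending the full unit step to enlarge the radius from $d(z,\cop^0)$ to $d(z,\cop^0)+1$. Reducing the claim to the two endpoints, the inequalities $d(z,u),\,d(z,v)\le d(z,\cop^1)$ become an elementary estimate: writing $R=d(z,\cop^0)$ and using that $u$ and $v$ are displaced from $\cop^0$ tangentially (along $\Pi$, orthogonal to the normal), one has $d(z,u)=\sqrt{R^2+|\cop^0-u|^2}$, which is at most $R+1=d(z,\cop^1)$ as soon as $R \ge (|\cop^0-u|^2-1)/2$, and symmetrically for $v$.

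Thus the argument goes through provided $Q$ reaches far enough along the normal to accommodate such a $z$. This is exactly the room supplied by the surrounding construction, in which $\Pi$ is chosen so that $Q$ is the large, obstacle-containing side of the environment; I would make this dependence explicit rather than hide it, since it is where the geometry of the ambient setup (as opposed to a purely formal property of $\rho$) enters the proof.
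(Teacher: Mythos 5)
There is a genuine gap, and it sits exactly where all the difficulty of this lemma is concentrated: your pursuer moves twice in a row. The configuration $\cop^0=\rho(\rob^0)$ is itself reached by a pursuer move (in the proof of Lemma \ref{lemma:copwin} it is the pursuer's first turn), so the next move belongs to the evader, who travels from $\rob^0$ to some $\rob^1$; only then does the pursuer get the ``single move'' promised by the lemma. Your transition step --- a unit step along the normal toward $\rob^0$, justified ``because the evader has not yet moved'' --- therefore uses a tempo the pursuer does not have. With the turn order respected, the pursuer must do two things with one unit step from $\cop^0$: land on the shortest path from $z$ to the \emph{new} position $\rob^1$ (a ray rotated off the line $\Lambda$ by the evader's lateral displacement $h$), and land on it at radius at least $\sqrt{R^2+b^2}$ from $z$, so that the guarded ball contains $u$ and $v$ at the same instant that lion's invariant is established. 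This simultaneous requirement is exactly what constraints \eqref{eqn:2one}--\eqref{eqn:2three} in the paper's proof encode, and it forces the anchor distance to depend on the evader's distance $a$ from $\Pi$ after his move: the paper takes $s=b^4/a$ and remarks that this dependence is unavoidable. Your evader-independent choice $R\approx(b^2-1)/2$ fails once the tempo is corrected. For example, with $b=10$, $R=50$, $\rob^0=(0.1,0)$ and $\rob^1=(0.1,1)$ (a legal evader move that neither crosses $\Pi$ nor ends within distance $1$ of $\cop^0$), the point of the ray from $z=(-50,0)$ through $\rob^1$ at radius $\sqrt{R^2+b^2}\approx 50.99$ lies at distance about $1.41$ from $\cop^0$, so the pursuer cannot reach it; landing on the new ray at any smaller radius leaves $u,v$ outside the ball, and the lemma's conclusion (lion's strategy with $\Pi$ inside the guarded region after one move) is not attained.

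A secondary error: you require the anchor $z$ to be a point of $Q$, ``deep inside'' it, and assert that the surrounding construction supplies this room because $Q$ is the obstacle-containing side. That is backwards. In the application, $\Pi$ is tangent to $J=\hull(H)$ at $\cop^0$, so the inward normal enters the hull immediately --- possibly the obstacle itself --- and in general no point of the environment lies far along it. The anchor must be treated as a virtual point, which is why the paper replaces $\Pi$ by a longer segment ``which extends outside the boundary of $P$'' and never requires $z\in P$. Your overall geometric picture (far anchor on the normal, ball through the endpoints of $\Pi$, monotone growth under lion's strategy) does match the paper's; what is missing is the actual content of the proof, namely the quantitative verification that the transition can be executed as a \emph{response} to an adversarial evader move rather than against a stationary evader.
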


\begin{proof}
Let $\Lambda$ be the line through $p^0$ and $e^0$.
Let  $a = d(\rob^1, \Lambda)$ and  let 
$$b=\max \left\{ 2, d(\cop^0, u), d(\cop^0, v), \sqrt{a} \,  \right\}.$$ 
We use the coordinate system with $\cop^0=(0,0)$ and where $\Lambda$ is the  $x$-axis, so that
$\rob^0 = (a_0, 0)$ and $\rob^1=(a,h)$ for some $a_0, a \in \R^+$ where $(a-a_0)^2 + h^2 \leq 1$. 
 We can then safely replace $\Pi$ by a line segment of length $2b$ (which extends outside the boundary of $P$) with endpoints
 $u = (0 -b)$, $v= (0,b)$. The relevant geometry is shown in Figure \ref{fig:dist1}. We will show that
in a single move, the pursuer can transition to a lion's strategy that still guards $\Pi$. In particular, we will find $s \in \R^+$ such that the pursuer can move onto the circle with center 
$z=(-s, 0)$ and radius $ \sqrt{ s^2 + b^2}$, which contains  the segment $\Pi$. We will see that choosing $s=b^4/a$ is sufficient.


\begin{figure}[ht]
\begin{center}
\begin{tikzpicture}[scale=1]

\path (0,0) coordinate (C0);
\path (1,0) coordinate (R0);
\path (.9,.9) coordinate (R1);
\path (0,1) coordinate (R1P);

\path (0,2) coordinate (B);
\path (0,-2) coordinate (BB);

\path (-6.75,0) coordinate (X);

\draw[fill] (C0) circle (1pt);
\draw[fill] (R0) circle (1pt);
\draw[fill] (R1) circle (1pt);

\draw[fill] (B) circle (1pt);

\draw[fill] (X) circle (1pt);

\node[below left=2pt] at (C0) {$\cop^0=(0,0)$};
\node[right=2pt] at (R0) {$\rob^0=(a_0,0)$};
\node[right=2pt] at (R1) {$\rob^1=(a,h)$};

\node[above=2pt] at  (B) {$(0,b)$};

\node[below=2pt] at (X) {$z=(0,-s)$};

\draw (0,-1) -- (B);

\node[left] at (0,-1) {$\Pi$};

\node at (-3.5,-.25) {$\Lambda$};

\draw (C0) -- (R0);

\draw (B) -- (X) -- (C0);

\draw (X) -- (R1);

\begin{scope}[shift={(B)}]
\draw[dashed] (0,0) arc (17.25: -9: 6.75);
\end{scope}

\node at (-3.5,1.4) {$\sqrt{s^2 + b^2}$};

\path (.25,.83) coordinate (Z);
\path (.25,0) coordinate (Z1);
\path (0,.83) coordinate (Z2);

\draw[fill] (Z) circle (1pt);

\node at (1.25,1.5) {$\cop^1=(x, y)$};

\draw[-latex] (.7, 1.35) -- (.35, 1);


\draw[dotted] (C0) -- (Z);
\draw[dotted] (R0) -- (R1);

\end{tikzpicture}

\caption{The transition from guarding to lion when $\rob^1$ is distance $a$ away from the guarded line $\Pi$, and distance $h$ from the line through $\cop^0$ and $\rob^0$.}

\label{fig:dist1}
\end{center}
\end{figure}

The center of our circle $z=(-s,0)$ and the pursuer location $\cop^1 = (x,y)$ must satisfy 
\begin{eqnarray}
\label{eqn:2one}
x^2 + y^2 & \leq& 1, \\
\label{eqn:2two}
(s+x)^2 + y^2 &=& s^2 + b^2, \\
\label{eqn:2three}
\frac{y}{s+x} &=& \frac{h}{s+a}.
\end{eqnarray}
The last equation is equivalent to $(s+x) = y(s+a)/h$.
Using this value in equation \eqref{eqn:2two} and setting $s=b^4/a$ yields
\[
y^2 \leq h^2 \left( \frac{b^8/a^2 + b^2}{b^8/a^2 + 2  b^4 + a^2 + h^2}   \right)<  h^2 \leq 1.
\]
We have
\[
x = \frac{s+a}{h}y - s \leq \sqrt{s^2+b^2}  - s\leq  s \left( 1 + \frac{b^2}{2s^2} \right)  - s = \frac{b^2}{2s}  = \frac{a}{2b^2} < 1,
\]
because we chose $b$ so that $a \leq b^2$.
Therefore
\[
x^2 + y^2 
\leq
\frac{b^4}{4s^2} + \frac{h^2 (s^2 + b^2)}{s^2 + 2 a s + a^2 + h^2} 
\leq 
\frac{b^4}{4s^2} + \frac{ s^2 + b^2}{s^2 + 2 a s + a^2} 
< 1.
\]
In other words, this pair $(x,y)$ satisfies the final constraint equation \eqref{eqn:2one}. This also guarantees that $x < a$ because $y < h$ and equation \eqref{eqn:2three} holds. 
In conclusion, the pursuer can take one step onto the line connecting $z$ and $\rob^1$ and immediately guard a disc that contains the line segment $\Pi$. 
\end{proof}

We note that the progress that the pursuer makes during this transition depends upon how close the evader is to the obstacle. Indeed, we use the point $z$ as our center, which is at distance $s=b^4/a$, so the capture time is inversely proportional to the evader's initial distance from the obstacle. Given our restrictive definition of capture (colocation, as opposed to proximity), there is no way around this. This problem does not manifest itself in the two-cop strategy in Theorem \ref{thm:leapfrog} since the pursuers alternate their guarding and pursuit roles.

\begin{lemma}
\label{lemma:copwin}
Suppose that $P$ is a polygonal environment with one obstacle $H$ whose convex hull $J=\hull(H)$ has perimeter $\ell \leq 2$. Then $P$ is one-pursuer-win. 
\end{lemma}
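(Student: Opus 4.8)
The plan is to reduce the environment to a simply connected one that a single pursuer can sweep with lion's strategy, using the transition Lemma~\ref{lemma:transition} as the engine and the hypothesis $\ell \le 2$ to control the single place where the hole can cause trouble. First I would record the geometric meaning of the perimeter bound. Since $J$ is convex, $\diam(J) \le \ell/2 \le 1$, and for any two points $a,b \in \partial J$ the shorter boundary arc of $\partial J$ lies in the free space (or runs along $\partial H$) and has length at most $\ell/2 \le 1$; hence $d_P(a,b) \le 1$. In words: a pursuer standing anywhere on $\partial J$ can step to \emph{any} other point of $\partial J$ in a single move --- he can cross the obstacle at will. This is exactly the fact that fails in Lemma~\ref{lemma:robberwin}, where the evader parks at the boundary point antipodal to $\rho(\cop)$, a distance $\ell/2 > 1$ away and hence unreachable.

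Next I would strip away the non-convex structure. Writing $Q_i$ for the pockets of $J \setminus H$ and $\phi_i : Q_i \to \Pi_i$ for the path projections onto the hull edges, exactly as in the proof of Lemma~\ref{lemma:robberwin}, Corollary~\ref{cor: piecewise projection} assembles a projection onto $\partial J$. Using the Clearing Lemma~\ref{lemma:clear}, the pursuer sweeps each simply connected pocket and thereafter guards $\partial J$, which confines the evader to the exterior $P \setminus J$ and lets $J$ act as a convex obstacle. So it suffices to treat $H = J$ convex.

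With $H$ convex, the heart of the argument is to install a guarded cut that confines the evader to a simply connected region and then hand off to lion's strategy. I would have the pursuer guard a line segment $\Pi$ running from $\partial P$ to $\partial P$ and tangent to $J$, so that $J$ lies on one side of $\Pi$ and the single loop of the annulus $P \setminus J$ is severed; by the Guarding Lemma~\ref{lemma:guard}, applied to the metric projection $\rho$ onto the evader-free side, the evader is trapped in a region that, thanks to the tangency, is simply connected. Once the pursuer occupies $\cop = \rho(\rob)$ on $\Pi$, Lemma~\ref{lemma:transition} lets him pass in a single move to lion's strategy centered at a point $z$ whose expanding guarded disk $B(z, d^t)$ contains $\Pi$ for all time. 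Since the evader's region is now simply connected and its cut stays sealed, Lemma~\ref{lemma:lion} guarantees the disk engulfs the region and the evader is captured.

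The step I expect to be the genuine obstacle is ruling out the one configuration that breaks lion's strategy: the evader retreating to the far side of $J$, where the two ways around the obstacle tie and the shortest path from $z$ becomes ambiguous. This is precisely the hiding place the evader exploits when $\ell > 2$. Here the perimeter bound rescues the pursuer: such a far-side point lies within distance $\ell/2 \le 1$ of the pursuer's current contact point on $\partial J$, so rather than resolve the ambiguous geodesic he simply steps across $J$ and re-establishes the guard on the far side, keeping the cut sealed and the region simply connected from his vantage point. Verifying that this crossing can always be scheduled without surrendering the monotone progress of lion's strategy, and that it is compatible with the disk-guarding invariant supplied by Lemma~\ref{lemma:transition}, is the delicate part; the remainder is bookkeeping with the projection and guarding lemmas of Section~\ref{sec:proj}.
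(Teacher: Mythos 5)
Your convex-case core is the paper's own argument: using $\ell/2 \le 1$ the pursuer reaches the metric projection $\rho(\rob^0) \in \partial J$ in a single move, draws the maximal segment $\Pi$ through that point perpendicular to the segment joining it to $\rob^0$, and converts the guard into lion's strategy via Lemma~\ref{lemma:transition} inside the simply connected side containing the evader. But your two deviations from this line are both genuine gaps. First, the reduction ``it suffices to treat $H=J$ convex'' is not delivered by the lemmas you cite. Guarding $\partial J$ is indeed easy (your opening observation lets the pursuer jump to any projection point on $\partial J$ in one move), but ``sweeping each pocket'' is not: the Clearing Lemma~\ref{lemma:clear} requires a projection onto a \emph{simply connected} target, and you never exhibit a $(P,Q_i)$-projection onto a pocket --- the maps $\phi_i\colon Q_i \to \Pi_i$ you borrow from Lemma~\ref{lemma:robberwin} project the pocket onto the hull edge, i.e.\ the wrong direction, while the projection assembled by Corollary~\ref{cor: piecewise projection} has target $\partial J$, a topological circle, to which lion's strategy does not apply. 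Moreover, if the evader starts inside a pocket, guarding $\partial J$ traps him \emph{in} the pocket rather than confining him to $P \setminus J$, so the reduction never gets off the ground. The paper sidesteps all of this: wherever the evader sits, the cut $\Pi$ through $\rho(\rob^0)$ leaves him in a simply connected region --- outside $J$, or wedged between $\Pi$ and $H$ when he is inside a pocket, since in that case $\Pi$ runs along the pocket's mouth --- and the same transition-plus-lion argument finishes in either case.

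Second, the difficulty you single out as ``the genuine obstacle'' does not exist, and the remedy you sketch would sink the proof. The cut is drawn \emph{after} the evader commits to a position, through his own projection, so he begins on the non-$J$ side by construction; to ever reach the far side of $J$ he would have to cross $\Pi$, which lies inside the disc centered at $z$ that the pursuer guards from the transition onward, and any such crossing is an immediate capture (this is the guarding property of lion's strategy noted in Section~\ref{sec:lion}). Lion's strategy is therefore played entirely within the simply connected region $P_e$, where no around-the-obstacle ambiguity of geodesics from $z$ can arise, and the pursuer never needs to ``step across $J$ and re-establish the guard on the far side.'' Indeed he must not: abandoning the established guard to re-anchor across $J$ discards the monotone progress invariant, and a strategy that repeatedly re-anchors comes with no termination argument --- the evader could provoke such re-anchorings forever. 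The proof works precisely because the cut is established once, and is never surrendered.
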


\begin{proof}
Initially, the pursuer chooses his position $\cop^0$ to be some point on the boundary of $J =\hull(H)$. The evader then chooses his initial position $\rob^0$. On his first turn, the pursuer moves to the metric projection $\cop^1 = \rho(\rob^0)$ on $\partial J$. Draw the  maximal line segment  $\Pi \subset P$ through $\cop^1$ that is perpendicular to the segment joining $\cop^1$ and $\rob^0$. Let the endpoints of $\Pi$ be $u,v \in \partial P$.
The pursuer currently guards $\Pi$. If the evader is not in $J$, then the area guarded by the pursuer contains the entire obstacle $H$, which means that $P_e$ is simply connected. If the evader is in $J$, then he is trapped in a simply connected area between $\Pi$ and $H$.
In either case, Lemma \ref{lemma:transition} allows  the pursuer can transition from guarding $\Pi$ to lion's strategy in a simply connected environment.  Using Lion's Strategy, the pursuer catches the evader by Lemma \ref{lemma:lion}.
\end{proof}



\section{The leapfrog strategy}
\label{sec:leapfrog}

We use the projection framework of Section \ref{sec:proj} to prove our main result, Theorem \ref{thm:leapfrog}. We show that if an environment has a decomposition satisfying (L1) -- (L4) then two pursuers can use the \emph{leapfrog strategy} to capture the evader. Figure \ref{fig:leap} showed an example of an environment with a leapfrog partition $Q_0 \subset Q_1 \subset Q_2 \subset Q_3=P$. Figure
\ref{fig:leapfrog-example} summarizes  a leapfrog pursuit  in this environment, giving an illustrative example of the  strategy described in the proof of Theorem \ref{thm:leapfrog}. Each subfigure shows the minimal path used to find the location of the evader projection $\sigma_i$ on $\partial Q_i$. Subfigure (c) also shows the shortest path used for lion's strategy, and these two paths coincide in subfigure (f).

Intuitively, this strategy works as follows. While the first pursuer guards a subregion $Q$, the second pursuer clears new territory and guards a larger subregion $Q'$ containing $Q$. At that point, the second pursuer switches into guarding mode, and the first pursuer leapfrogs over him to clear new territory. This process continues until the evader is caught. 
However, there is another more subtle way for the pursuers to make progress: if the evader ``makes a mistake'' by passing through $Q$, then the current guarding pursuer can immediately capture the evader in his responding move.
For example, in  Figure \ref{fig:leapfrog-example} (d), the evader cannot move between connected components of $Q \backslash Q_1$ without being caught by $\cop_2$.
Finally, we 
note that the leapfrogging means that our current argument is completely independent of the material in Section \ref{sec:obstacle} since neither pursuer must alternate directly from guarding to pursuing in a single move.

Before we begin the proof, we reflect on the conditions (L1) -- (L4).
Conditions (L1) and (L2) let us use the projection framework in Section \ref{sec:proj} for region $Q_0$.   Condition (L3) ensures that  $\partial Q_{i+1} \cap \partial Q_i \neq \emptyset.$ Indeed, these boundaries are both polygons, and if they are disjoint, then the closed path $\delta Q_{i+1}$ is a nontrivial loop in $Q_{i+1} \backslash Q_i$.
Condition (L4) is included for efficiency: if we start with a larger family of nested regions, then we can simply ignore the subregions that do not touch additional obstacles; such a coarsening is shown in Figure \ref{fig:twosweeppartition} below.

\begin{figure}[ht]
\begin{center}
\begin{tabular}{ccc}

\begin{tikzpicture}
\leapenv

\draw[fill] (-1, -1.5)  circle (1.5pt);
\node[left] at (-1, -1.5)  {$\cop_1$};

\draw[fill] (-.85,-.4) circle (1.5pt);
\node[above] at (-.85,-.4) {$\cop_2$};

\draw[fill] (1.29, .18) circle (1.5pt);
\node[above] at (1.29, .18) {$\rob$};

\draw (.5,-1.5) -- (1,-.25) --  (1,0) -- (1.29, .18);
\draw (-.5,-1.5) -- (1.05, .36);

\draw[thick, fill=gray] (1.05, .34) circle (1.5pt);
\node[left] at (1.05, .34) {$\sigma_1$};

\draw[thick, fill=gray] (-.5, .51) circle (1.5pt);
\node[above] at (-.5, .51) {$\sigma_0$};


\node at (1,-2) {(a)};

\end{tikzpicture}

&
\qquad \qquad
&

\begin{tikzpicture}

\draw[draw=none, fill=gray!50] (-.5,.5) --(-.5, -1.5) -- (-1, -1.5) -- (-1.25, .2) -- cycle;
\leapenv

\draw[fill] (-.5, 0)  circle (1.5pt);
\node[above right] at (-.5, 0)  {$\cop_1 = \sigma_0$};

\draw[fill] (-.85,-.4) circle (1.5pt);
\node[above] at (-.85,-.4) {$\cop_2$};

\draw[fill] (.41, -.3) circle (1.5pt);
\node[right] at (.41, -.3)  {$\rob$};

\draw (-.5, -1.5) -- (.41, -.3);



\node at (1,-2) {(b)};

\end{tikzpicture}

\\

\begin{tikzpicture}

\draw[draw=none, fill=gray!50] (-.5,.5) --(-.5, -1.5) -- (-1, -1.5) -- (-1.25, .2) -- cycle;

\leapenv

\draw[fill] (-.5, .51)  circle (1.5pt);
\node[above left] at (-.5, .51)  {$\cop_1 = \sigma_0$};

\draw[fill] (.4,-.75) circle (1.5pt);
\node[above] at (.4,-.75)  {$\cop_2$};

\draw[fill] (1.28, -.5) circle (1.5pt);
\node[right] at (1.28, -.5)   {$\rob$};

\draw (.5,-1.5) --  (.78, -.8) -- (1.28, -.5);
\draw (-.5,-1.5) -- (1,-.25);

\draw[thick, fill=gray] (1,-.25) circle (1.5pt);
\node[left] at (1,-.25) {$\sigma_1$};


\node at (1,-2) {(c)};

\end{tikzpicture}

&&

\begin{tikzpicture}

\draw[draw=none, fill=gray!50] (-1.25, .2)  -- (-1, -1.5)  -- (.5, -1.5) -- (1,-.25) -- (1,0) -- (1.17,1.21); 

\leapenv

\draw[fill] (-.5, .26)  circle (1.5pt);
\node[right] at (-.5, .26)  {$\cop_1 = \sigma_0$};

\draw[fill] (1.1, -.75) circle (1.5pt);
\node[right] at (1.1, -.75)   {$\rob$};

\draw (.5,-1.5) --  (.78, -.8) -- (1.1, -.75);
\draw (-.5,-1.5) -- (.925, -.45);

\draw[thick,fill] (.90, -.48) circle (1.5pt);
\node[right] at (.90, -.48) {$\cop_2 = \sigma_1$};


\node at (1,-2) {(d)};

\end{tikzpicture}

\\

\begin{tikzpicture}

\draw[draw=none, fill=gray!50] (1.25,1.25) -- (2, 1)  -- (2.25, .25) -- (1,0) -- (1,-.25) -- (2.5, -.35) -- (2.5,-1.5) -- (-1, -1.5) -- (-1.25, .2)  -- (1.25,1.25);
\leapenv

\draw[fill] (2.5, -.5)  circle (1.5pt);
\node[left] at (2.5, -.5)  {$\cop_1 = \sigma_2$};

\draw[fill] (3.35, -1.03) circle (1.5pt);
\node[right] at (3.35, -1.03)   {$\rob$};

\draw (2.5, -1.5) -- (3.25, -1.2) -- (3.35, -1.03);

\draw[thick, fill] (1.08, .7) circle (1.5pt);
\node[below left] at (1.08, .7) {$\cop_2 = \sigma_1$};


\draw (.5, -1.5) -- (1.25, -1.2) -- (2.5, -.5);

\node at (1,-2) {(e)};

\end{tikzpicture}

&&

\begin{tikzpicture}

\draw[draw=none, fill=gray!50] (1.25,1.25) -- (2, 1)  -- (2.25, .25) -- (1,0) -- (1,-.25) -- (2.5, -.35) -- (2.5,-1.5) -- (-1, -1.5) -- (-1.25, .2)  -- (1.25,1.25);
\leapenv

\draw[fill] (2.5, -.38)  circle (1.5pt);
\node[above] at (2.5, -.38)  {$\cop_1 = \sigma_2$};

\draw[fill] (3.5, -.8) circle (1.5pt);
\node[right] at (3.5, -.8)   {$\rob$};

\draw (2.5, -1.5) -- (3.25, -1.2) -- (3.5, -.8);

\draw[thick, fill] (2.9, -1.327) circle (1.5pt);
\node at (2.75, -1.13) {$\cop_2$};



\node at (1,-2) {(f)};

\end{tikzpicture}

\end{tabular}
\end{center}

\caption{An example leapfrog pursuit in the environment shown in Figure \ref{fig:leap} using path projections $\sigma_0, \sigma_1, \sigma_2$ rooted at $u,v,w$, respectively.  (a) The game begins and $\cop_1$ moves until (b) $p_1$ clears the first simply connected region $Q_0$  using lion's strategy. (c) Next,  $\cop_2$ plays lion's strategy (rooted at $u$) in $Q_1 \backslash Q_0$ against the pursuer projection $\sigma_1$ until (d) $\cop_2$ clears  this region. This also prevents $\rob$ from moving between different components of $Q \backslash Q_1$. After that,  (e)  $\cop_1$ leapfrogs to clear  $Q_3 \backslash Q_2$and finally, (f) $\cop_2$ uses lion's strategy (rooted at $w$) in the last region $Q_4 \backslash Q_3$ to capture $e$. }

\label{fig:leapfrog-example}

\end{figure}
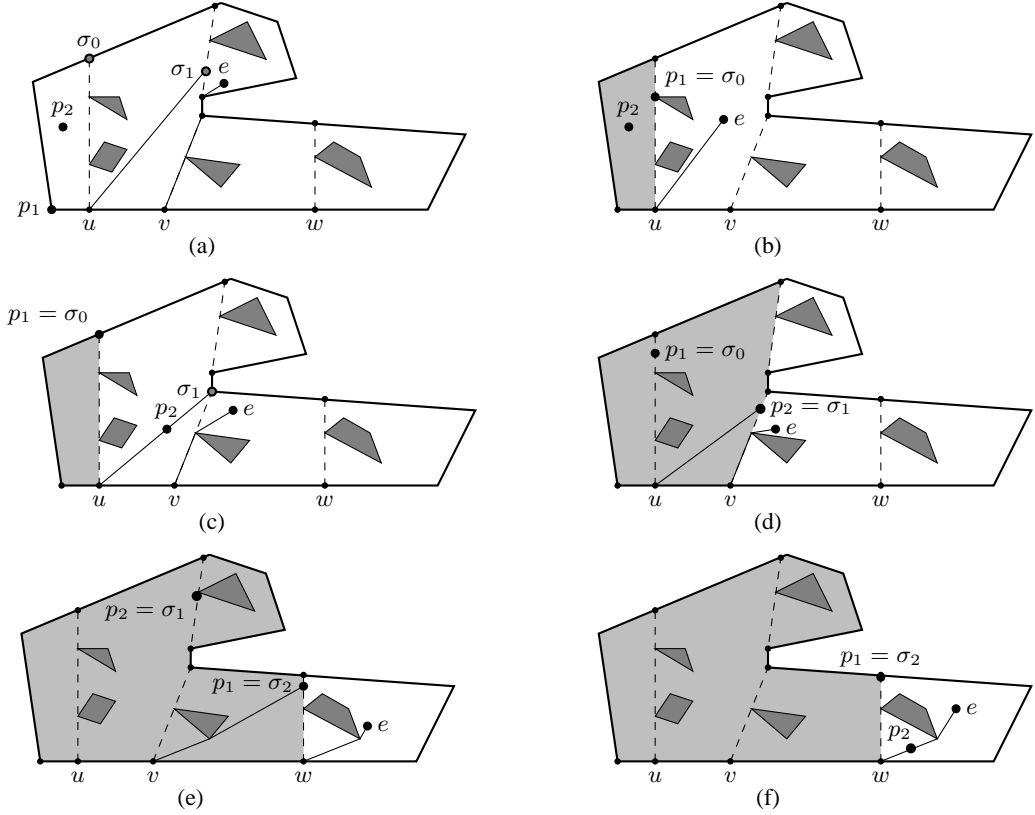

\begin{proofof}{Theorem \ref{thm:leapfrog}}
Note that the condition (L2) does not require that projections agree wherever the boundaries of the subregions intersect. The first order of business is to construct a new family of projections $\sigma_i: P \rightarrow Q_i$ for which this is the case.
We define these projections recursively, starting at $k-1$. We set $\sigma_{k-1} = \pi_{k-1}$ and then define
$\sigma_i = \pi_i \circ \sigma_{i+1}$ for $k-2 \geq i \geq 0$. The function  $\sigma_i$ is a projection: for $x,y \in P$, we have
\begin{eqnarray*}
d_{Q_0} (\sigma_0(x), \sigma_0(y) ) &=& d_{Q_0} ( \pi_0 (\sigma_1(x)), \pi_0(\sigma_1(y))) \\
& \leq & d_{Q_1} ( \sigma_1(x), \sigma_1(y) )
\, = \,
d_{Q_1} ( \pi_1(\sigma_2(x)), \pi_1(\sigma_2(y) )) \\
&\leq & \cdots  \, \leq \, d_{Q_{k-1}} (\pi_{k-1}(x), \pi_{k-1}(y)) \leq d_P(x,y).
\end{eqnarray*}
By construction, these recursively defined projections satisfy the conditions of Corollary \ref{cor: piecewise projection}. 
Most importantly for us, if $\sigma_{i+1}(x) \in \partial Q_i \cap \partial Q_{i+1}$ then $\sigma_i(x)=\sigma_{i+1}(x)$. In other words, suppose that $\cop_1$ is using  projection $\sigma_i$ and $\cop_2$ is using projection $\sigma_{i+1}$. If an evader projection is on the shared boundary $\partial Q_i \cap \partial Q_{i+1}$, then both pursuers agree on its location. This continuity is crucial to the leapfrog strategy, since it allows $\cop_1$ to react when $\cop_2$ sees the evader projection cross through $\partial Q_i$.

The leapfrog strategy proceeds as follows.
First, $\cop_1$ clears $Q_0$ with respect to the projection $\sigma_0: P \rightarrow Q_0$, as described in Lemma \ref{lemma:clear}.  Assume inductively that one pursuer, say $\cop_1$, currently guards $Q_{i}$. 
Next, $\cop_2$ works to clear $Q_{i+1}$ with respect to $\sigma_{i+1}: P \rightarrow Q_{i+1}$. 
When $Q_{i+1}\setminus Q_i$ is not connected, this clearing movement  requires $\cop_1$ as well. Indeed, if $\rob$ tries to move between regions then $\cop_1$ can respond with capture, see Figure \ref{fig:leapfrog-example} (d). The guarding $\cop_1$ captures $e$ whenever the evader moves between components of $Q_{i+1}\setminus Q_i$. Indeed,
if $Q_{i+1}\setminus Q_i$ is not connected, then we must have  $\partial Q_i \cap \partial Q_{i+1} \neq \emptyset$.  
 If a shortest path between $\sigma_{i+1}(\rob^{t-1})$ and $\sigma_{i+1}(\rob^{t})$ intersects $Q_i$ (and therefore intersects $\partial Q_i \cap \partial Q_{i+1}$), then $\cop_1$ can immediately respond by the capture move $\cop_1^{t+1} = \rob^{t}$ by Corollary \ref{cor:guard}. Indeed,  $\sigma_{i+1}(e)$ moves between components of $Q_{i+1}\setminus Q_i$ if and only if $e$ moves between components of $Q_{i+1} \backslash Q_i$. By the construction of the projections $\sigma_0, \ldots , \sigma_{k-1}$, the guarding $\cop_1$ is in position to capture the evader in response to this boundary crossing.
This means that the evader cannot move between the components of of $Q_{i+1} \backslash Q_i$ without being captured.

Let us return to $\cop_2$'s attempt to clear $Q_{i+1}$. 
Let $R_1, R_2, \ldots , R_k$ be the simply connected components of $Q_{i+1} \backslash Q_i$, and say that $\sigma_{i+1}(\rob) \in R_1$. 
While $\cop_1$ guards $Q_i$, pursuer $\cop_2$ moves into $R_1$ and tries to clear this region.
If the projection $\sigma_{i+1}(\rob)$ ever leaves $R_1$, then $\cop_1$ can immediately respond by capturing $e$ 
(because the evader's projection moves through $\partial Q_i$, as described above). Otherwise, the projection $\sigma_{i+1}(\rob)$ always remains in $R_1$, so $\cop_2$ 
can capture this position  by Lemma \ref{lemma:clear}.
At this point, $\cop_2=\sigma_{i+1}(\rob)$. If $\rob = \sigma_{i+1}(\rob)$, then the evader is caught. Otherwise, $\cop_2 = \sigma_{i+1}(\rob) \in \partial Q_{i+1}$, and $\cop_2$ switches to  guarding $Q_{i+1}$. This releases $\cop_1$ to start  clearing $Q_{i+2}$. This leapfrogging continues until the evader is caught, which must occur when the pursuers finally control $Q_k \backslash Q_{k-1}$.
\end{proofof}

\begin{proofof}{Corollary \ref{cor:leapfrog}}
Since $Q_0$ is geodesically convex and $\partial Q_0$ contains two points $u,v \in \partial P$, the boundary $\partial Q_0$ is a minimal loop. By Lemma \ref{lemma:minloop}, there is a path projection $\pi_0: P \rightarrow Q_0$. Likewise,  $u,v \in \partial Q_i$ for every $0 \leq i \leq k$, so there is a path projection $\pi_i : P \rightarrow Q_i$. This is a family of projection functions required by  (L2) in Theorem \ref{thm:leapfrog}.
\end{proofof}

We conclude this section by giving an upper bound on the time to capture of the leapfrog strategy. The leapfrog strategy repeatedly uses lion's strategy in simply connected environments. Isler et. al. \cite{isler05tro} prove that in a simply connected polygon $R$, lion's strategy completes in time $O(m \cdot \mbox{diam}(R))$ where $m$ is the number of vertices of $R$ and $\mbox{diam}(R) = \max_{u,v} d_R(u,v)$. Therefore, lion's strategy completes in time $O( n \cdot \mbox{diam}(P))$
for each of $Q_0$ and $Q_{i+1} \backslash Q_i$, $0 \leq i \leq k-1$. Since $k \leq h+1$ (where $h$ is the number of holes), the leapfrog strategy completes   in time $O( n \cdot h \cdot \mbox{diam}(P))$

\section{Examples of Leapfrog Environments}
\label{sec:sweep}

\def\trienv{

\draw (0,0) -- (4.9,0) -- (2.45,3.5) -- cycle;  

\begin{scope}[shift={(.65, .35)}]

\draw[fill=gray] (0,0) -- (1.4,0) -- (0.7,1) -- cycle;  

\end{scope}

\begin{scope}[shift={(1.74,1.85)}]

\draw[fill=gray] (0,0) -- (1.4,0) -- (0.7,1) -- cycle;  

\end{scope}

\begin{scope}[shift={(2.85,.35)}]

\draw[fill=gray] (0,0) -- (1.4,0) -- (0.7,1) -- cycle;  

\end{scope}

\draw[fill=gray] (1.54, 1.6) -- (3.36, 1.6) -- (2.44, .35);

}

Theorem \ref{thm:leapfrog} and Corollary \ref{cor:leapfrog} describe the characteristics of a leapfrog decomposition  $Q_0 \subset Q_1 \subset \cdots \subset Q_k = P$. Identifying leapfrog environments (or more generally, two-pursuer-win environments) remains an open problem. In this section, we consider two fundamental subfamilies of leapfrog environments and give polynomial time algorithms that verify membership. 

\subsection{Minimal path leapfrog environments}

Our first family of leapfrog environments arises when  the path construction methods of three-pursuer Minimal Path Strategy described in Bhadauria et al.~\cite{bhadauria+klein+isler+suri} actually produces a leapfrog decomposition. We refer the reader to Section 4 of that paper for proofs and discussion of the more subtle points. 
Figure \ref{fig:trienv} shows an example of a successful minimal path leapfrog decomposition resulting from their minimal path constructions.
Start with anchor points $u, v$ which are vertices on the outer boundary $\partial P$.
Choose path $\Pi_0$ to be a shortest $(u,v)$-path in $P$. 
Given $\Pi_0$, we next find a $(u,v)$-shortest path $\Pi_1$ under the restriction that $\Pi_1$ includes at least one vertex of $P$ that is not in $\Pi_0=Q_0$. 
 Furthermore, we restrict ourselves paths that do not  loop around obstacles in $P$. Let $\partial Q_1 = \Pi_0 \cup \Pi_1$ with $Q_1$ its interior. If $Q_1$  is simply connected then we continue; otherwise our attempt to build a leapfrog decomposition fails for the choice of $(u,v)$.

Next, suppose that we have already successfully defined $Q_0 \subset Q_1 \subset \cdots \subset Q_{i-1}$.
Set  $R := P \backslash \mathrm{int}(Q_{i-1})$ and note that if a pursuer were guarding $Q_{i-1}$ then  $P_e \subset R$.  Find a $(u,v)$-shortest path $\Pi_i \in R$ that includes at least one vertex of $P$ that is not in $Q_{i-1}$.   Let $\partial Q_i$ be the outermost boundary of $Q_{i-1} \cup \Pi_i$ and let $Q_i \subset P$ be its closure in $P$. If $Q_i \backslash Q_{i-1}$ is simply connected then we continue; otherwise our attempt fails for anchors $(u,v)$. It is easy to see that the path $\Pi_i$ is a minimal path in $P_e$.
This process creates a leapfrog decomposition if  we can continue the construction until $Q_k = P$.

\begin{mydef}[Minimal Path Leapfrog Environment]
	A polygonal environment $P$ is a \emph{minimal path leapfrog environment} if the series of paths $\Pi_0, \Pi_1, \ldots , \Pi_k$ produced by the Minimal Path Strategy induces a leapfrog decomposition $Q_0 \subset Q_1 \subset \cdots \subset Q_k$.
\end{mydef}

Minimal path leapfrog environments can be identified in polynomial time.
The visibility graph $G(P)$ of the polygon (whose nodes correspond to the $n$ vertices of $P$) captures the paths described above, and this graph can be constructed in $O(n^2 \log n)$ time (we only need to construct this graph once). Finding the loop-free paths described above requires finding shortest paths and second-shortest paths in $G(P)$ or a subgraph of $G(P)$. Each such path can be found using Yen's algorithm \cite{yen} for finding short loop-free paths in graphs. This algorithm runs in $O(n m + n \log n)$ time  where $m$ is the number of edges in $G(P)$.  We must iterate at most $n$ times (since each path $\Pi_i$ visits at least one previously unvisited vertex). We might need to try all ${n \choose 2}$ possible choices for anchor points, so the worst case time bound would be $O(n^4 m + n^4 \log n) = O(n^6)$. 

\begin{figure}[ht]


\begin{center}

\begin{tikzpicture}[scale=.65]

\node at (.5,2) {$P$};

\trienv

\begin{scope}[shift={(6,0)}]

\node at (.5,2) {$Q_0$};
\trienv

\draw[very thick] (0,0) -- (2.45,3.5);

\draw[fill] (0,0) circle (3pt);
\draw[fill] (2.45,3.5) circle (3pt);

\node[below] at (0,0) {$u$};
\node[right] at (2.45,3.5) {$v$};

\end{scope}


\begin{scope}[shift={(12,0)}]

\node at (.5,2) {$Q_1$};
\draw[very thick, fill=gray!50] (0,0) -- (2.45,3.5) -- (1.54, 1.6) -- cycle;

\trienv

\draw[fill] (1.54, 1.6) circle (3pt);

\end{scope}

\begin{scope}[shift={(18,0)}]

\node at (.5,2) {$Q_2$};
\draw[very thick, fill=gray!50] (0,0) -- (2.45,3.5) -- (1.54, 1.6) -- (1.35, 1.35) -- cycle;

\trienv

\draw[fill] (1.35, 1.35)  circle (3pt);

\end{scope}

\begin{scope}[shift={(0,-5)}]

\node at (.5,2) {$Q_3$};

\draw[very thick, fill=gray!50] (0,0) -- (2.45,3.5) -- (1.74,1.85) -- (1.35, 1.35) -- cycle;

\trienv

\draw[fill] (1.74,1.85)   circle (3pt);

\end{scope}

\begin{scope}[shift={(6,-5)}]

\node at (.5,2) {$Q_4$};
\draw[very thick, fill=gray!50] (0,0) -- (2.45,3.5) -- (2.44, 2.85) -- (1.35, 1.35) -- cycle;

\trienv

\draw[fill] (2.44, 2.85)  circle (3pt);

\end{scope}

\begin{scope}[shift={(12,-5)}]

\node at (.5,2) {$Q_5$};

\draw[very thick, fill=gray!50] (0,0) -- (2.45,3.5) -- (2.44, 2.85)  -- (1.35, 1.35) -- (.65, .35) -- cycle;

\trienv

\draw[fill] (.65, .35)  circle (3pt);

\end{scope}

\begin{scope}[shift={(18,-5)}]

\node at (.5,2) {$Q_6$};

\draw[very thick, fill=gray!50] (0,0) -- (2.45,3.5) -- (3.36, 1.6) -- (2.44, .35)   -- cycle;

\trienv

\draw[fill] (3.36, 1.6) circle (3pt);
\draw[fill] (2.44, .35)  circle (3pt);

\end{scope}

\begin{scope}[shift={(0,-10)}]

\node at (.5,2) {$Q_7$};

\draw[very thick, fill=gray!50] (0,0) -- (2.45,3.5) -- (3.55,1.35) --   (2.44, .35)  -- cycle;

\trienv

\draw[fill] (3.55,1.35)   circle (3pt);

\end{scope}

\begin{scope}[shift={(6,-10)}]

\node at (.5,2) {$Q_8$};

\draw[very thick, fill=gray!50] (0,0) -- (2.45,3.5) -- (3.55,1.35) -- (2.85, .35)   -- cycle;

\trienv

\draw[fill] (2.85, .35)   circle (3pt);

\end{scope}

\begin{scope}[shift={(12,-10)}]

\node at (.5,2) {$Q_9$};

\draw[very thick, fill=gray!50] (0,0) -- (2.45,3.5)  -- (4.25, .35)  -- cycle;

\trienv

\draw[fill] (4.25, .35)   circle (3pt);

\end{scope}

\begin{scope}[shift={(18,-10)}]

\node at (.5,2) {$Q_{10}$};

\draw[very thick, fill=gray!50] (0,0) -- (2.45,3.5)  -- (4.9, 0)  -- cycle;

\trienv

\draw[fill] (4.9, 0)   circle (3pt);

\end{scope}

\end{tikzpicture}

\end{center}

\caption{A minimal path leapfrog environment. The leapfrog decomposition $Q_0 \subset Q_1 \subset \cdots \subset Q_{10}$ where $Q_i$ is created by finding a shortest $(u,v)$-path in $P \backslash \mathrm{int}(Q_{i-1})$ for which $Q_i \backslash Q_{i-1}$ is simply connected. Each successive figure shows the  vertices of $P$ appearing in a $Q_i$ for the first time.}

\label{fig:trienv}

\end{figure}
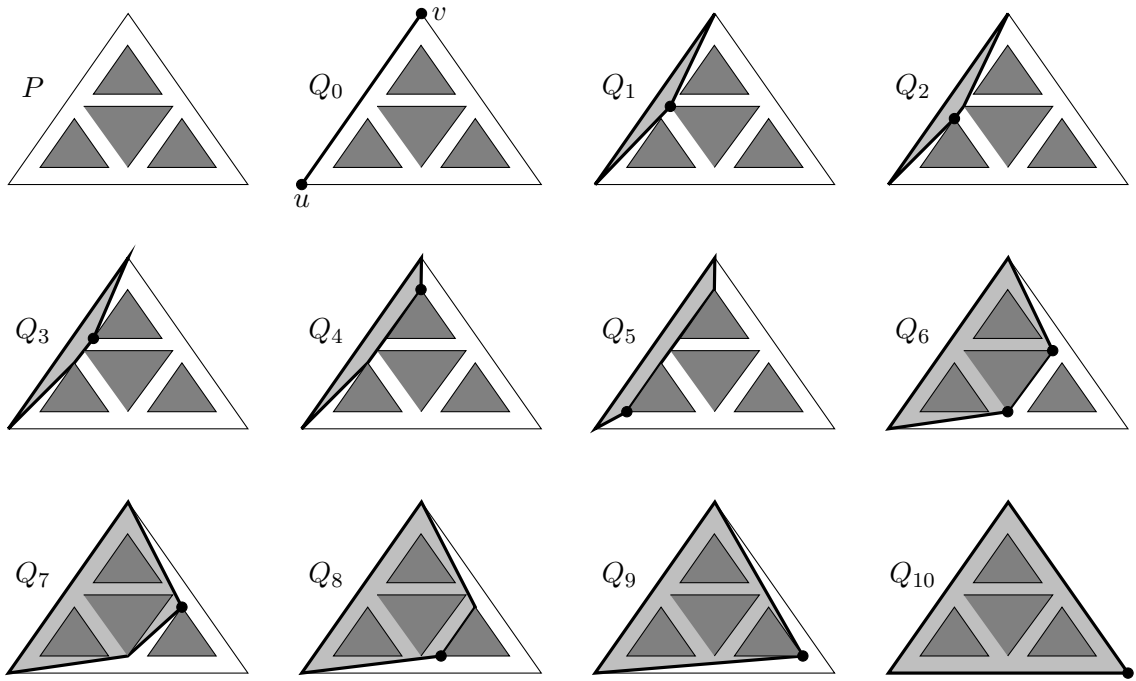

\subsection{Strictly two-sweepable environments}


In this section, we describe an even simpler class of environments for which  the the two-pursuer leapfrog strategy is successful.  Specifically, we show that a \emph{strictly two-sweepable environment} $P$  has a family of nested sub-environments  $Q_0 \subset Q_1 \subset \cdots Q_k = P$ that satisfy the hypothesis of Corollary~\ref{cor:leapfrog}. Finally, we give an $O(n^2)$  algorithm that determines  if a given polygon $P$ is strictly two-sweepable, based on finding a specific path in the dual of $P$. This algorithm explicitly constructs the family 
$Q_0, Q_1, \dots, Q_k$ for use in the leapfrog strategy.


In \cite{bose}, Bose and Kreveld present a method for determining if a polygon can be monotonically swept by a line. We extend these ideas to a polygonal environment with obstacles.  Intuitively speaking, the obstacles of these polygons are well-separated, so we can find our leapfrog partition using straight-line boundaries.  We begin with two definitions. 

\begin{mydef}[Sweepable Polygons \cite{bose}]
	A polygon $P$ is \emph{sweepable} if a line $\ell$ can be swept continuously over $P$ such that each
	intersection of the line and $P$ is a convex set. We call such a line a \emph{sweep line} of $P$.
	Polygon $P$ is  \emph{strictly sweepable} if there exists a sweep line such that no portion of $P$ is swept over more than once.
\end{mydef}

\begin{mydef}[Strictly Two-Sweepable Environments]
	A polygonal environment $P$ is said to be \emph{two-sweepable} if a line can be swept continuously over
	$P$ such that each cross section of $P$ with respect to this line is the disjoint union of at most two convex sets.		
	 Environment $P$ is \emph{strictly two-sweepable} if $P$ is two-sweepable and
	its boundary polygon $B_P$ is strictly sweepable.	
	Equivalently, $P$ is strictly two-sweepable if  there exists a sweep line $\ell$ such that
	no portion of $P$ is swept more than once and the intersection of  $\partial P$ and each cross section of  $P$, with respect to $\ell$,
	contains at most two points.
\end{mydef}			


Note that the environment in Figure \ref{fig:trienv} is not strictly two-sweepable, so this family is distinct from minimal-path leapfrog environments. This brings us to the main result of this section.
			
\begin{theorem}
	\label{thm:2sweep}
	If  the polygonal environment $P$ is strictly two-sweepable then $P$ is two-pursuer-win.
\end{theorem}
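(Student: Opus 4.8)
The plan is to read off, from the sweep that witnesses strict two-sweepability, a nested family $Q_0 \subset Q_1 \subset \cdots \subset Q_k = P$ satisfying the hypotheses of Corollary~\ref{cor:leapfrog}, and then simply invoke that corollary. I would fix a sweep line $\ell$ realizing the definition and, for each parameter $t$, let $S(t) \subseteq P$ denote the region swept so far. Because no portion of $P$ is swept twice, the sets $S(t)$ are nested and increasing, and the frontier separating $S(t)$ from the unswept part is exactly the current cross section $\ell \cap P$. Since $\partial P$ meets each position of $\ell$ in at most two points, the intersection of $\ell$ with the region bounded by $\partial P$ is a single segment; the two-sweepable condition (at most two convex pieces) then forces at most one obstacle to meet $\ell$ at any instant, so the obstacles are encountered one at a time, and I would list them $H_{(1)}, H_{(2)}, \ldots, H_{(h)}$ in the order the sweep first reaches them.

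Next I would choose the cut moments. For $0 \le j \le h-1$, let $t_j$ be the instant at which the frontier $\ell \cap P$ is tangent to the first-touched (``bottom'') point $p_{j+1}$ of $H_{(j+1)}$, set $Q_j = S(t_j)$, and set $Q_h = P$. By construction $Q_j$ encloses exactly the obstacles $H_{(1)}, \ldots, H_{(j)}$ as holes and meets $H_{(j+1)}$ only at $p_{j+1}$, so the number of obstacles met by $Q_j$ increases strictly with $j$; this gives (L4) with $k=h$. The cap $Q_0$, swept before any obstacle is enclosed, is simply connected, which is (L1), and $Q_0 \cap \partial P$ is a nondegenerate arc, so it certainly contains two points of $\partial P$.

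For geodesic convexity I would argue as follows. The frontier of each $Q_j$ is the straight chord $\ell \cap P$ at time $t_j$, a single segment with endpoints on $\partial P$ that is tangent to $H_{(j+1)}$ and crosses no obstacle interior (the already-enclosed obstacles lie strictly on the swept side and the remaining ones strictly on the unswept side). A straight segment is automatically minimal in either subregion it bounds, by the triangle inequality $d_\ell(x,z) = |xz| \le d(x,y) + d(y,z)$; taking the chord as one side of the loop and the corresponding arc of $\partial P$ as the (degenerate) other side shows that $\partial Q_j$ is a minimal loop. Lemma~\ref{lemma:minloop} then produces a projection $P \to Q_j$, so each $Q_j$ is geodesically convex. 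The holes inside $Q_j$ cause no trouble here, since the projection is the identity on $Q_j$ and only maps the unswept complement onto the chord.

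The step that will require the most care is (L3), that each $Q_{j+1} \setminus Q_j$ is a finite union of simply connected regions, because this difference is the band the sweep traverses while enclosing the single new obstacle $H_{(j+1)}$, and a priori such a band is an annulus around $H_{(j+1)}$. The whole point of taking the chord of $Q_j$ tangent to the bottom $p_{j+1}$ of the \emph{next} obstacle is that the band is then pinched to zero width at $p_{j+1}$: it is a ``horseshoe'' that wraps around $H_{(j+1)}$ but is cut open where it meets the frontier of $Q_j$, hence simply connected (and if the obstacle presents a flat bottom edge to $\ell$, the band simply separates into two simply connected caps, which is still permitted). Pinching at the bottom of the next obstacle also forces the final region to be all of $P$ with no leftover obstacle-free cap, which is exactly what lets (L4) survive to the last step. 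The remaining work should be routine given this framework: verifying the one-at-a-time (well-separated) obstacle order rigorously from the two-convex-piece condition, allowing a rotating rather than translating sweep line (the frontier is a straight segment either way, so the minimality argument is unchanged), and treating nonconvex obstacles via their first and last contact points. With (L1), (L3), (L4), geodesic convexity of every $Q_j$, and two points of $Q_0 \cap \partial P$ in hand, Corollary~\ref{cor:leapfrog} yields that $P$ is two-pursuer-win.
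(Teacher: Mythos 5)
Your proposal is correct and follows essentially the same route as the paper's own proof: the paper likewise cuts $P$ at the sweep-line positions $\lambda_1,\dots,\lambda_h$ where the line first touches each obstacle, takes the nested unions of the resulting simply connected bands as the family $Q_0 \subset Q_1 \subset \cdots \subset Q_h = P$, and invokes Corollary~\ref{cor:leapfrog}. The verifications you spell out (one obstacle met at a time, minimality of the straight frontier chord via the triangle inequality, and the band around each newly enclosed obstacle being pinched open at the first-contact point) are precisely the details the paper compresses into ``it is easy to check,'' so your argument matches the paper's in both structure and substance.
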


\begin{proof}
We describe how to construct a family of nested sub-environments satisfying the hypothesis of
Corollary~\ref{cor:leapfrog} using the movement of the sweep line.
Let $P$ be a strictly two-sweepable polygonal environment with $n$ vertices $\{v_1, \dots, v_n\}$ and sweep line $\ell$.
Let $\ell_1, \dots, \ell_n$ denote the positions of the sweep line $\ell$ intersecting each vertex of $P$.
Here the vertex labels (and cross section labels) are ordered non-increasingly with respect to the order in which they are
swept by $\ell$; the fact that $P$ is strictly two-sweepable ensures that such an ordering is well-defined.

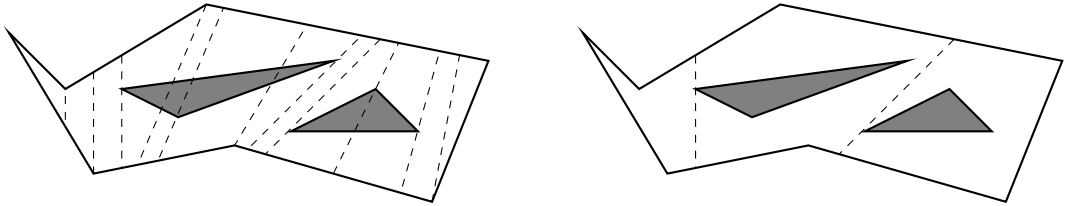
\begin{figure}[ht]
\begin{center}
\begin{tabular}{ccc}
\begin{tikzpicture}[scale=.75]

\draw[thick] (.5,1) -- (1.5,0) -- (4,1.5) -- (9,.5) -- (8, -2) -- (4.5, -1) -- (2,-1.5) -- cycle;
\draw[thick, fill=gray] (2.5,0) -- (6.25,.5) -- (3.5,-.5)- - cycle; 
\draw[thick, fill=gray] (5.5,-.75) -- (7,0) -- (7.75,-.75)- - cycle; 

\begin{scope}
  \clip  (.5,1) -- (1.5,0) -- (4,1.5) -- (9,.5) -- (8, -2) -- (4.5, -1) -- (2,-1.5) -- cycle;
  
  \draw[dashed] (1.5, 1) -- (1.5,-2);
  \draw[dashed] (2, 2) -- (2,-2);
  \draw[dashed] (2.5, 2) -- (2.5,-2);
  \draw[dashed] (4, 1.5) -- (2.5,-2);
  \draw[dashed] (4.33, 1.5) -- (2.83,-2);
  \draw[dashed] (6, 1.5) -- (4.5,-1);
  
  \draw[dashed] (3.8,-2) -- (7.8,2);
  \draw[dashed] (4.2,-2) -- (8.2,2);  
  
  \draw[dashed] (6,-2) -- (8,2);    
  \draw[dashed] (7.4,-2) -- (8.5,2);    
  \draw[dashed] (8,-2) -- (8.75,2);      
  
\end{scope}

\end{tikzpicture}
&
\hspace{.15in}
&
\begin{tikzpicture}[scale=.75]

\draw[thick] (.5,1) -- (1.5,0) -- (4,1.5) -- (9,.5) -- (8, -2) -- (4.5, -1) -- (2,-1.5) -- cycle;
\draw[thick, fill=gray] (2.5,0) -- (6.25,.5) -- (3.5,-.5)- - cycle; 
\draw[thick, fill=gray] (5.5,-.75) -- (7,0) -- (7.75,-.75)- - cycle; 

\begin{scope}
  \clip  (.5,1) -- (1.5,0) -- (4,1.5) -- (9,.5) -- (8, -2) -- (4.5, -1) -- (2,-1.5) -- cycle;

  \draw[dashed] (2.5, 2) -- (2.5,-2);
 
  \draw[dashed] (4.2,-2) -- (8.2,2);

\end{scope}

\end{tikzpicture}
\end{tabular}
\end{center}
\caption{\small{A region partitioned into convex regions by a sweeping line.  Left: the sweep lines $\ell_1, \ldots \ell_n$.  Right: the partition $\cP$.}}
\label{fig:twosweeppartition}
\end{figure}

We  construct our family of nested subregions.
Let $\cH = \{H_1, H_2, \dots, H_h\}$ denote the set of obstacles in the environment $P$;
as before, the obstacles are labeled with respect to the order in which are swept by $\ell$.
Let $ \{ \lambda_{1},  \ldots, \lambda_{h} \} \subset \{ \ell_1, \ldots , \ell_n \}$ denote the positions of the sweep line $\ell$
first intersecting each obstacle $H_1, H_2, \dots, H_h$ as $\ell$  sweeps  $P$.
We also define  $\lambda_{0} = \ell_1,$ $ \lambda_{{h+1}} = \ell_{n}$, corresponding to the first and last boundary vertices encountered by $\ell$ during the sweep.
Let $\cP = \{ P_0, P_1, \dots,  P_h \}$ denote the sub-environments of $P$, where $P_k$ is inscribed by $\lambda_{k}$ and $\lambda_{k+1}$.
The family $\cP$ is a partition  of $P$ where
each sub-environment $P_i$ is simply connected, as shown
 in Figure \ref{fig:twosweeppartition}.
Finally, construct our family $\cQ$ of nested sub-environments by taking 
inductively by taking $Q_0 = P_0$ and $Q_i = P_i \cup Q_{i-1}$ for $1 \leq i \leq h$.
It is easy to check that the family $\cQ$
satisfies the hypothesis of Corollary~\ref{cor:leapfrog}.
\end{proof}

\subsection{Duality algorithms for finding sweepable polygons}

We conclude  with a practical discussion of how to determine if a given
polygonal environment is strictly two-sweepable (and therefore two-pursuer win).
The technique  extends the duality-based method of Bose and Kreveld in \cite{bose} 
from simply connected polygons to  
 polygonal environments containing obstacles.
We summarize Bose and Kreveld\rq{}s results here and direct the reader to \cite[Section 4]{bose} for a
more thorough treatment of the topic.

Let $P$ be a simply connected polygon.  To determine if $P$ is
sweepable and/or strictly sweepable, Bose and Kreveld  consider the movement of a proposed sweep line
in the dual of $P$. 
We consider the duality transform that maps a given point $p = (a,b)$ to the line $D_p = \{(x,y): ax - b\}$ with slope $a$ and $y$-intercept $-b$,
and maps each line $L = \{(x,y): y = mx + c\}$ to the point $(m, -c)$.
Each edge of $P$ is mapped to the face of the dual inscribed
by the two lines corresponding to its endpoints; we call the pair of faces of the dual corresponding to an edge in the primal a {\it double wedge}.
We call the lower and upper envelopes of all faces in the dual the {\it start face} and {\it end face} respectively.

The movement of a line swept continuously across $P$ dualizes to a path from the start face to the end face in the dual
arrangement. Each cross section of $P$ defined by the line $\ell$ is mapped to a point in the dual in the intersection
of all double wedges corresponding to the edges intersected by $\ell$.
Rotation of a line past a vertical position is represented by the path \lq{}\lq{}jumping\rq\rq{} from one unbounded face
of the dual arrangement to the opposite unbounded face in the double wedge corresponding to the
edges of $P$ intersected by the vertical line.
Each such jump corresponds to a change in orientation of the line as it sweeps $P$.
Crucially, as the line sweeps $P$ its trajectory in the dual is restricted to an even number of
such jumps to ensure that the orientation of the line is maintained.

Recall that $P$ is sweepable if and only if a line can be continuously across $P$ such that each cross section
of $P$ is a convex set. Equivalently, $P$ is sweepable if and only if there exists sweep line intersecting at most two edges of $P$ at a time.
This sweep line corresponds to a path in the dual arrangement that does not traverse the intersection of more than two double wedges.
We call a face in the dual arrangement in which at least three double wedges overlap a {\it forbidden face}.
Therefore, $P$ is sweepable if and only if there is a path from
the start face to the end face in the dual arrangement avoiding all forbidden faces.
We call such a path in the dual arrangement a {\it sweep path}. 
Both the dual arrangement of $P$ and the set of all forbidden faces can be constructed in $O(n^2)$ time
(cf. \cite[Chapter 8]{compgeo}).
Performing depth-first search on the faces of the dual arrangement  also takes 
$O(n^2)$ time; indeed, the graph induced by the dual arrangement is planar and contains $O(n^2)$ nodes and arcs. 

\begin{figure}[ht]
\begin{center}
\begin{tikzpicture}[scale=.4]

\begin{scope}[scale=1.5, domain=-3.5:3.5]
\draw[very thin,color=gray] (-3,-2) grid (2, 2);
\draw[<->, dashed](0,-2.5) -- (0,2.5); 
\draw[<->, dashed](-3.5,0) -- (2.5,0); 

\draw[color=blue,very thick] (0,0) -- (0,1) -- (1,1) -- (1,-1)--(-2, -1) -- (-2,1) -- (-1, 1) -- (-1,0) --  cycle;

\draw[fill=red] (0,0)  circle(3pt);
\draw[fill=blue](0,1) circle(3pt);
\draw[fill=green] (1,1) circle(3pt);
\draw[fill=orange] (1, -1) circle(3pt);
\draw[fill=purple] (-2,-1) circle(3pt);
\draw[fill=yellow] (-2,1) circle(3pt);
\draw[fill= cyan] (-1, 1) circle(3pt);
\draw[fill=violet] (-1,0) circle(3pt);
\end{scope}

\begin{scope}[shift={(10,0)}, scale =1, domain=-5.2:5.2]
\pgfsetfillopacity{0.2}
\draw[very thin,color=gray] (-4,-4) grid (4,4); 
\draw[<->,dashed] (0,-4.5) -- (0,4.5); 
\draw[<->,dashed] (-4.5,0) -- (4.5,0); 

\begin{scope}
 \clip (-4.2,-4.2) rectangle (4.2,4.2);
\pgfsetfillopacity{0.25}

\draw[fill=blue, ultra thin] (4,-1)--(-4,-1)--(-4,-4) -- (-3,-4) -- (4,3) -- cycle;
\draw[fill=blue, ultra thin]   (3,4) -- (4,4) -- (4,3) --(-3,-4) -- (-4,-4) -- (-4,-3) -- cycle;
\draw[fill=blue, ultra thin] (-4,-3) -- (-4,4) -- (-1.5, 4) -- (2.5,-4) -- (4,-4) -- (4,4) -- (3,4) -- cycle;
\draw[fill=blue, ultra thin] (-1.5,4) -- (2.5,-4) -- (1.5,-4) -- (-2.5,4) -- cycle;
\draw[fill=blue, ultra thin] (-2.5,4) -- (1.5,-4) -- (3,-4) -- (-4,3) -- (-4,4) -- cycle;
\draw[fill=blue, ultra thin] (-4,3) -- (-4,4) -- (4,-4) -- (3,-4) -- cycle;
\draw[fill=blue, ultra thin] (4,-4) -- (4,0) -- (-4,0) -- (-4, 4) -- cycle;
\draw[fill=blue, ultra thin] (4,0) -- (4,-1) -- (-4,-1) -- (-4,0) -- cycle;
\end{scope}

\clip (-4.2,-4.2) rectangle (4.2, 4.2);

\draw[color=red, thick]  (-4.2,0) -- (4.2,0); 
\draw[color=blue, thick]  (-4.2,-1) -- (4.2,-1); 
\draw[color=green, thick]  plot(\x, {1*\x -1});  
\draw[color=orange, thick] plot(\x, {1*\x +1}); 
\draw[color=purple, thick] plot(\x, {-2*\x + 1}); 
\draw[color=yellow, thick] plot(\x, {-2*\x -1}); 
\draw[color= cyan, thick] plot(\x, {-1*\x -1}); 
\draw[color=violet, thick] plot(\x, {-1*\x}); 
\end{scope}

\begin{scope}[shift={(21,0)}, scale =1, domain=-4.5:4.5]
\footnotesize
\draw[very thin,color=gray] (-4,-4) grid (4,4); 
\draw[dashed] (-4.2,0) -- (4.2,0) node[right] {}; 
\draw[dashed] (0,-4.2) -- (0,4.2) node[above] {}; 
\begin{scope}
 \clip (-4.2,-4.2) rectangle (4.2, 4.2);
\pgfsetfillopacity{0.5}

\draw[fill=blue, ultra thin] (-4,4) -- (0,0) -- (4,0) -- (4,-1) -- (0,-1) --  (-4,3) -- cycle;

\draw[color=red, thick]  (-4.2,0) -- (4.2,0); 
\draw[color=blue, thick]  (-4.2,-1) -- (4.2,-1); 
\draw[color=green, thick]  plot(\x, {1*\x -1});  
\draw[color=orange, thick] plot(\x, {1*\x +1}); 
\draw[color=purple, thick] plot(\x, {-2*\x + 1}); 
\draw[color=yellow, thick] plot(\x, {-2*\x -1}); 
\draw[color=cyan, thick] plot(\x, {-1*\x -1}); 
\draw[color=violet, thick] plot(\x, {-1*\x}); 

\end{scope}

\clip (-5,-5) rectangle (5, 5);
\node at (0,-3.5)  {start face} ;
\node at  (0, 3.5) {end face}; 
\draw[|->, very thick, densely dotted] (0.15, -3.2) to [out=75, in =180] (4, -1.5) node[right] {1};
\draw[->, very thick, densely dotted] (-4,-1.5) node[left] {1} to [out=0, in =360]  (-4, 1.5) node[left] {2} ;
\draw[->, very thick, densely dotted] (4, 1.5) node[right] {2} to [out= 180, in=-25]  (.5,3.2);

\end{scope}
\end{tikzpicture}
\end{center}
\caption{\small{Finding a path using unbounded faces in the dual. All vertices of the polygon are crossed. Since this jumping movement happens an even number of times (labeled 1 and 2), the sweeping line has the same left-right orientation in the initial position as in the ending position.}}
\label{fig:verticaldual}
\end{figure}
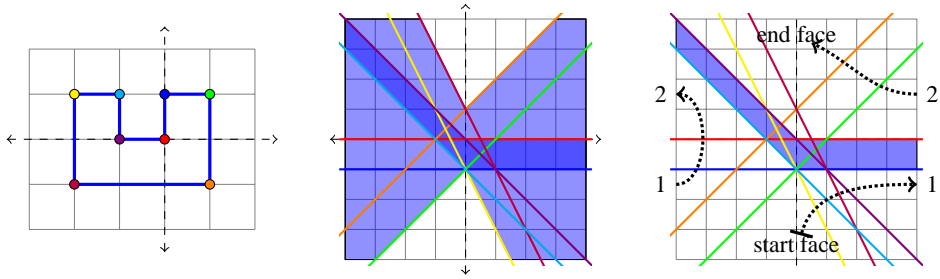


A similar process can be used to determine if $P$ is strictly sweepable. We must take care when sweeping \emph{reflex vertices}, where the interior angle is greater than $\pi$ radians.  
In this case, we add additional vertices to $\partial P$ by temporarily extending the edges at reflex vertices into lines, and then adding a new vertex at each intersection of these lines with the boundary. We denote the resulting \emph{extended polygon} as $P\rq{}$.
The following lemma from \cite{bose} characterizes when a polygon $P$ is strictly sweepable
using the extension $P\rq{}$.

\begin{lemma}[{\cite[Lemma~4]{bose}}]
\label{lem: BK 4}
	Let $P$ be a simple polygon, and let $P\rq{}$ be its extended polygon.  $P$ is strictly sweepable if and only if $P\rq{}$ admits a sweep line that traverses each vertex exactly once.
\end{lemma}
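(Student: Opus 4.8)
The plan is to prove the two implications separately, relating the continuous motion of a sweep line to the combinatorial sequence of vertices it passes, and treating the extension points of $P'$ as markers that detect re-sweeping inside reflex pockets. Throughout I will use the following reformulation: at each instant a sweep line $\ell$ splits $P$ into the already-swept region and the not-yet-swept region, separated by the cross section $\ell \cap P$. Since a sweep line of a polygon has convex (hence connected) cross sections by definition, strictness is equivalent to the statement that the swept region grows monotonically, i.e.\ that every point of $\overline{P}$ is covered by $\ell$ exactly once.

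For the forward direction I would start from a strict sweep of $P$ and observe that, as a region, $P'$ is identical to $P$: the extension procedure only subdivides existing boundary edges by inserting the new vertices $w_i$ where the extended reflex edges meet $\partial P$. Hence the given strict sweep of $P$ is literally a sweep line of $P'$, with the same convex cross sections. Because the sweep is strict, every point of $\overline{P}$ is covered by $\ell$ exactly once, and in particular this holds for every original vertex and every extension vertex $w_i$. This is precisely the assertion that $\ell$ traverses each vertex of $P'$ exactly once, so the forward direction is essentially bookkeeping.

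The substance is in the converse. A sweep line of $P'$ has convex cross sections by definition, and since $P'$ and $P$ coincide as regions it is already a valid sweep of $P$; the force of the hypothesis lies in the requirement that it traverse \emph{each} vertex exactly once, from which I must extract strictness. I would argue the contrapositive: if the sweep re-covers some point, it must traverse some vertex of $P'$ twice. The key geometric lemma to establish is that re-sweeping is localized to reflex pockets. Concretely, the direction $\theta(t)$ of $\ell$ varies continuously, and a point is covered twice only when $\ell$ reverses the component of its motion across that point; for convex cross sections this reversal can be pivoted only at a reflex vertex $v$, and the re-swept region is exactly the pocket cut off from $P$ by the supporting lines of the two edges incident to $v$ --- the very lines whose intersections with $\partial P$ define the extension vertices $w_1,w_2$. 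I would then show that covering this pocket twice forces $\ell$ to pass through $w_1$ or $w_2$ a second time, contradicting the hypothesis.

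The main obstacle I anticipate is this localization step: rigorously proving that any failure of strictness in a valid convex-cross-section sweep occurs inside a reflex pocket and is detected by a repeated traversal of an extension vertex. This requires a careful case analysis of how the oriented line interacts with a reflex vertex $v$ --- distinguishing whether $v$ is an endpoint of the current cross section or is passed transversally --- together with the observation that the edge extensions are precisely the critical orientations at which $\ell$ becomes flush with a reflex edge and the pocket is about to be re-entered. The remaining bookkeeping (in particular that rotations past vertical must occur an even number of times to preserve the line's orientation, as noted in the surrounding discussion) is routine once the pocket structure is understood.
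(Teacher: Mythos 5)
First, a point of comparison: the paper does not prove this statement at all. It is quoted verbatim as Lemma~4 of Bose and van Kreveld \cite{bose}, and the paper directs the reader to that reference for the proof. So there is no internal argument to measure your proposal against; it has to stand on its own, and as written it does not.

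The genuine gap is in your converse direction, and you flag it yourself: the ``localization'' claim --- that any failure of strictness in a convex-cross-section sweep is pivoted at a reflex vertex, re-covers exactly the associated pocket, and therefore forces a second traversal of an extension vertex --- is something you say you ``would then show,'' not something you prove. That claim is essentially the entire content of the lemma; everything else in your write-up is bookkeeping. Worse, the mechanism you describe is not exhaustive: re-sweeping need not be anchored at a reflex vertex. A line can pivot about an interior point of $P$ (in a convex polygon, say, every cross section remains a segment throughout such a motion), and the re-swept region is then a neighborhood of the pivot, nowhere near any reflex pocket. The conclusion ``re-sweeping forces a doubly-traversed vertex of $P'$'' may still be true, but a proof must account for all such motions --- for instance by analyzing the boundary of a positive-area doubly-swept region and showing it must contain a vertex of $P'$ crossed twice --- not only reversals at reflex vertices. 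Your forward direction also leans on an unjustified equivalence: the paper defines strictness as ``no portion of $P$ is swept over more than once,'' a positive-area condition, whereas ``each vertex is traversed exactly once'' concerns measure-zero sets. A strict sweep can re-touch a boundary vertex from the exterior side (e.g., sweep a square, exit, and return to graze the far edge) without re-sweeping any area of $P$, so even the existence of a vertex-once sweep given strictness requires an argument --- say, trimming or modifying the strict sweep to remove gratuitous re-touchings --- rather than the pure bookkeeping you claim it to be.
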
	

Algorithm \ref{alg:ssweepfind} determines whether a polygon is strictly sweepable and identifies the corresponding sweep line (if one exists). Extending $P$ to $P\rq{}$ can be performed in $O(n^2)$ time using a brute-force algorithm. The extended polygon  $P\rq{}$ contains $m = O(n)$ vertices, so
constructing the dual arrangement and performing depth-first search on the faces of the dual arrangement can be performed in $O(m^2) = O(n^2)$ time.

\begin{algorithm}
\caption{Strictly Sweepable Path Search \cite{bose}}
\label{alg:ssweepfind}
\begin{algorithmic}
\item[] Given polygon $P$.
\item[] Extend all reflex vertices to obtain $P\rq{}$ with $m$ vertices
\item[] Compute dual arrangement $D_{P\rq{}}$ and identify all forbidden faces of $D_{P\rq{}}$.
\item[] Apply depth first search on the faces of $D_{P\rq{}}$ to find the sweep path in $D_{P\rq{}}$
crossing the fewest lines.
\item[] If this path crosses exactly $m$ lines then $P$ is strictly
sweepable.
\end{algorithmic}
\end{algorithm}


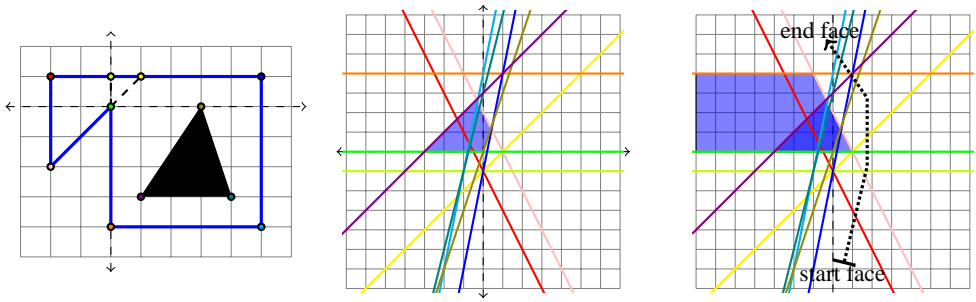
\begin{figure}[ht]
\footnotesize
\begin{center}
\begin{tikzpicture}[scale=.4]
\begin{scope}[shift = {(0, 1.5)}, domain=-6.5:6.5]
\draw[very thin,color=gray] (-3,-5) grid (6,2);
\draw[<->, dashed](0,-5.5) -- (0, 2.5); 
\draw[<->, dashed](-3.5,0) -- (6.5,0); 

\draw[color=blue,very thick] (-2, -2) -- (-2, 1) -- (5, 1) -- (5, -4) -- (0, -4) -- (0,0) -- (-2, -2);
%

\draw[dashed, thick] (0,0) -- (1, 1);
\draw[fill = yellow, thick](1,1) circle(3pt);

\draw[dashed, thick] (0,0) -- (0,1);
\draw[fill=lime, thick] (0,1) circle(3pt);

\draw[fill=red, thick](-2,1) circle(3pt);
\draw[fill=pink, thick](-2,-2) circle(3pt);
\draw[fill=green, thick](0,0) circle(3pt);
\draw[fill=orange, thick](0, -4) circle(3pt);
\draw[fill=cyan, thick](5, -4) circle(3pt);
\draw[fill=blue,  thick](5,1) circle(3pt);

\draw[fill=black] (1,-3) -- (3, 0) -- (4, -3);
\draw[fill=olive, thick] (3,0) circle (3pt);
\draw[fill=teal, thick](4,-3) circle (3pt);
\draw[fill=violet, thick](1,-3) circle (3pt);

\end{scope}

\begin{scope}[shift={(12.375,0)},scale=.65,domain=-7.5:7.5]

\draw[very thin,color=gray] (-7,-7) grid (7,7); 
\draw[<->, dashed] (-7.5,0) -- (7.5,0) node[right] {}; 
\draw[<->, dashed] (0,-7.5) -- (0,7.5) node[above] {}; 

\begin{scope}
\clip (-7.2,-7.2) rectangle (7.2,7.2);
\pgfsetfillopacity{0.5}
\draw[fill=blue, ultra thin] (-3,0) -- (-1/3, 8/3) -- (2/5, 6/5) -- (0,0) -- (-3,0) -- cycle;
\end{scope}

\begin{scope}
\clip (-7.2,-7.2) rectangle (7.2,7.2);
\draw[color=red, thick] plot (\x, {-2*(\x) -1});  
\draw[color=pink, thick] plot (\x, {-2*(\x) +2});  
\draw[color=lime, thick] (-7.2, -1) -- (7.2, -1) ; 
\draw[color=yellow, thick] plot (\x, {(\x) - 1}); 
\draw[color=green, thick] (-7.2,0) -- (7.2, 0) ; 
\draw[color=orange, thick] (-7.2, 4) -- (7.2, 4); 
\draw[color=cyan, thick] plot (\x, {5*(\x)+4}); 
\draw[color=blue,  thick]plot(\x, {5*(\x) - 1}); 
\draw[color=olive, thick] plot(\x, {3*(\x)});
\draw[color=teal, thick] plot(\x, {4*(\x) + 3}); 
\draw[color=violet, thick] plot(\x, {\x+3}); 
\end{scope}

\end{scope}

\begin{scope}[shift={(24,0)}, scale=.65,domain=-7.2:7.2]

\draw[very thin,color=gray] (-7,-7) grid (7,7); 
\draw[dashed] (-7.2,0) -- (7.2,0) node[right] {}; 
\draw[dashed] (0,-7.2) -- (0,7.2) node[above] {}; 

\begin{scope}
\clip (-7.2,-7.2) rectangle (7.2,7.2);
\pgfsetfillopacity{0.5}
\draw[fill=blue, ultra thin] (-3,0) -- (-1/3, 8/3) -- (2/5, 6/5) -- (0,0) -- (-3,0) -- cycle;
\draw[fill=blue, ultra thin] (-7,4) -- (-1,4) -- (1,0) -- (-7, 0) -- (-7,4) -- cycle;
\end{scope}

\clip (-7.2,-7.2) rectangle (7.2,7.2);
\draw[color=red, thick] plot (\x, {-2*(\x) -1}); 
\draw[color=pink, thick] plot (\x, {-2*(\x) +2}); 
\draw[color=yellow, thick] plot (\x, {(\x) - 1});
\draw[color=lime, thick] (-7.2, -1) -- (7.2, -1) ; 
\draw[color=green, thick] (-7.2,0) -- (7.2, 0) ;
\draw[color=orange, thick] (-7.2, 4) -- (7.2, 4);
\draw[color=cyan, thick] plot (\x, {5*(\x)+4});
\draw[color=blue,  thick]plot(\x, {5*(\x) - 1});
\draw[color=olive, thick] plot(\x, {3*(\x)});
\draw[color=teal, thick] plot(\x, {4*(\x) + 3}); 
\draw[color=violet, thick]plot(\x, {\x+3}); 


\node at (0.5,-6.2)  {start face} ;
\node at  (-0.65,6.2) {end face}; 
\draw[|->, very thick, densely dotted] (.55,-5.75) -- (1.75,-.75) -- (1.75,.25) -- (1.75,2.75) -- (-.35,  5.75);

\end{scope}

\end{tikzpicture}
\end{center}

\caption{\small{Determining if an environment is strictly two-sweepable. 
Extend all reflexive vertices to obtain $P'$ (left) and dualize it (center), taking forbidden faces to be those in which more than four double wedges overlap.
Overlay the forbidden faces from the dual of $B$. The valid path shown establishes that $P$ is strictly two-sweepable.}}
\label{fig:combineddual}
\end{figure}

We conclude by describing how to extend Algorithm~\ref{alg:ssweepfind}  to an algorithm for identifying
strictly two-sweepable environments.
Let $P$ be a polygonal environment containing obstacles and let $B = \partial P$ be its boundary polygon.
Let $P\rq{}$ and $B\rq{}$ be the extensions of $P$ and $B$, respectively, obtained by adding vertices
by extending edges at all reflex vertices of $P$ and $B$.
The environment $P$ is strictly two-sweepable if (1) a line can be swept across $P$
such that the cross sections of $P$ with respect to the line consist of at most two disjoint convex sets,
(2) the cross sections of $B$ are convex, and (3) no point of $P$ is swept more than once.
In order to decide whether $P$ is strictly two-sweepable, we must
determine (a) if the dual arrangement
of $P\rq{}$ admits a sweep-path that avoids all intersections of more than four double wedges, and (b)
if the dual arrangement of $B\rq{}$ admits a sweep-path avoiding all intersections of more than two double wedges. This is summarized in Algorithm \ref{alg:m2sfind}.

\begin{algorithm}
\caption{Strictly Two-Sweepable Path Search}
\label{alg:m2sfind}
\begin{algorithmic}
\item[] Given an environment $P$ with boundary polygon $B=\partial P$.
\item[] Extend all reflex vertices to obtain $P\rq{}$ (with $m$ vertices) and $B\rq$.
\item[] Compute dual arrangements $D_{P\rq{}}$ and $D_{B\rq}$.
\item[] Identify all forbidden faces of $D_{P\rq{}}$.
\item[] Identify all forbidden faces of $D_{B\rq}$ and overlay on $D_{P\rq}$.
\item[] Apply depth first search on the faces of $D_{P\rq{}}$ to find a path avoiding
all forbidden faces of $D_{P\rq}$ and $D_{B\rq}$ crossing the fewest lines.
\item[] If this path crosses exactly $m$ lines then $P$ is strictly two-sweepable.
\end{algorithmic}
\end{algorithm}

\section{Conclusion}
\label{sec:conc}

We have characterized when one pursuer can capture an evader is an environment with a single obstacle. An immediate question that remains to be answered is:  Under what conditions can one pursuer win in an environment with multiple obstacles? For example, our proof no longer holds in the case of two obstacles $H_1, H_2$, even when   $\hull(H_1 \cup H_2) \leq 2$. Indeed, we could have a long zig-zagging alleyway between the obstacles. This would allow the pursuer to sit in the alley, and force the pursuer to give up his guarding position of the convex hull. To forbid such a pathological environment, it would be reasonable to enforce a  \emph{minimum feature size}  (cf.~\cite{klein+suri}), meaning that no two vertices are within unit distance of one another. This simplifying assumption should make the two-obstacle case tractable.

The main open question regarding the lion and man game in polygonal environments is  to fully characterize environments that are two-pursuer-win.
In this work, our focus has been to give a characterization for environments in which a leapfrogging strategy is effective.  Theorem \ref{thm:leapfrog} gives a very general description of the required  family of nested subregions. The dual polygon algorithm in~Section \ref{sec:sweep} identifies one such family, namely strictly two-sweepable environments. It would be interesting to develop an algorithm that can detect when an environment has a leapfrog decomposition, or at least construct other types of leapfrog decompositions. 

\section{Acknowledgments}

This work was supported in part by the Institute for Mathematics and its Applications and in part by NSF Grant DMS-1156701. Volkan Isler was supported in part by NSF Grant IIS-0917676.

\bibliography{pursuit}

\begin{thebibliography}{10}

\bibitem{aigner+fromme}
M.~Aigner and M.~Fromme.
\newblock A game of cops and robbers.
\newblock {\em Discrete Appl. Math.}, 8:1--12, 1983.

\bibitem{alonso}
L.~Alonso, A.~S. Goldstein, and E.~M. Reingold.
\newblock Lion and man: Upper and lower bounds.
\newblock {\em INFORMS Journal on Computing}, 4(4):447, 1992.

\bibitem{alspach}
B.~Alspach.
\newblock Sweeping and searching in graphs: a brief survey.
\newblock {\em Mathematiche}, 59:5--37, 2006.

\bibitem{bhadauria+klein+isler+suri}
D.~Bhadauria, K.~Klein, V.~Isler, and S.~Suri.
\newblock Capturing an evader in polygonal environments with obstacles: The
  full visibility case.
\newblock {\em International Journal of Robotics Research}, 31(10):1176--1189,
  2012.

\bibitem{bonato+nowakowski}
A.~Bonato and R.~Nowakowski.
\newblock {\em The Game of Cops and Robbers on Graphs}.
\newblock American Mathematical Society, 2011.

\bibitem{bose}
P.~Bose and M.~van Kreveld.
\newblock Generalizing monotonicity: On recognizing special classes of polygons
  and polyhedra.
\newblock {\em International Journal of Computational Geometry and
  Applications}, 15(6):591--608, 2005.

\bibitem{chi}
T.~H. Chung, G.~A. Hollinger, and V.~Isler.
\newblock Search and pursuit-evasion in mobile robotics: a survey.
\newblock {\em Autonomous Robots}, 31:299--316, 2011.

\bibitem{compgeo}
M.~De~Berg, M.~Van~Kreveld, M.~Overmars, and O.~Schwarzkopf.
\newblock {\em Computational geometry}.
\newblock Springer, 1997.

\bibitem{guibas}
L.~J. Guibas, J.-C. Latombe, S.~M. Lavalle, D.~Lin, and R.~Motwani.
\newblock A visibility-based pursuit-evasion problem.
\newblock {\em International Journal of Computational Geometry and
  Applications}, 9:471--494, 1996.

\bibitem{hahn}
G.~Hahn.
\newblock Cops, robbers and graphs.
\newblock {\em Tatra Mountain Mathematical Publications}, 36:163--176, 2007.

\bibitem{isler05tro}
V.~Isler, S.~Kannan, and S.~Khanna.
\newblock Randomized pursuit-evasion in a polygonal environment.
\newblock {\em IEEE Transactions on Robotics}, 5(21):864--875, 2005.

\bibitem{isler04sidma}
V.~Isler, S.~Kannan, and S.~Khanna.
\newblock Randomized pursuit-evasion with local visibility.
\newblock {\em SIAM Journal on Discrete Mathematics}, 1(20):26--41, 2006.

\bibitem{klein+suri}
K.~Klein and S.~Suri.
\newblock Catch me if you can: Pursuit and capture in polygonal environments
  with obstacles.
\newblock In {\em Proceedings of the Twenty-Sixth Conference on Artificial
  Intelligence}, pages 2010--2016, 2012.

\bibitem{KR}
S.~Kopparty and C.~V. Ravishankar.
\newblock A framework for pursuit evasion games in $\mathbb{R}^n$.
\newblock {\em Inf. Process. Lett.}, 96:114--122, 2005.

\bibitem{littlewood}
J.~E. Littlewood.
\newblock {\em Littlewood's Miscellany}.
\newblock Cambridge University Press, 1986.

\bibitem{sgall}
J.~Sgall.
\newblock A solution to {D}avid {G}ale's lion and man problem.
\newblock {\em Theoretical Comp. Sci.}, 259(1--2):663--670, 2001.

\bibitem{yen}
J.~Y. Yen.
\newblock An algorithm for finding shortest routes from all source nodes to a
  given destination in general networks.
\newblock {\em Quarterly Appl. Math.}, 27:526--530, 1970.

\end{thebibliography}

\end{document}